\documentclass[a4paper]{article}
\usepackage{graphicx}
\usepackage{subfigure}
\usepackage{url}
\usepackage{textcomp}
\usepackage{fancyhdr}
\usepackage{appendix}
\usepackage{amsmath}
\usepackage{amssymb}
\usepackage{amsthm}
\usepackage{bbm}
\usepackage{siunitx}

\graphicspath{
{/Users/bertkappen/doc/publ/},
{/Users/bertkappen/doc/publ/thoughts/}
{/Users/bertkappen/doc/publ/thoughts/non_backprop_learning/}
{/Users/bertkappen/doc/publ/nips2013_workshop/}
{/Users/bertkappen/doc/publ/control_quantum/control_open_quantum_systems/position_measurement/}
{/Users/bertkappen/doc/publ/control_quantum/control_open_quantum_systems/simulations/numerical_accuracy/}
{/Users/bertkappen/doc/publ/control_quantum/control_open_quantum_systems/simulations/}
{/Users/bertkappen/doc/figures/screenshots/}
}

\usepackage[english]{babel}
\usepackage{lscape}
\usepackage{appendix}
\usepackage{amsmath}
\usepackage{amssymb}
\usepackage{amsthm}
\usepackage{txfonts}
\usepackage{graphicx}
\usepackage[all]{xy}
\usepackage[dvipsnames,usenames]{color}
\usepackage{latexsym}
\usepackage{url}
\usepackage{subfigure}
\usepackage{movie15}
\usepackage{hyperref}
\usepackage{algpseudocode}
\usepackage{multicol}
\usepackage{hyperref}
\usepackage{wrapfig}

\newtheorem{thrm}{Theorem}

\newtheorem{lemm}[thrm]{Lemma}

\newcommand{\bmpt}[1]{\begin{minipage}[t]{#1\textwidth}}
\newcommand{\bmp}[1]{\begin{minipage}{#1\textwidth}}
\newcommand{\emp}{\end{minipage}}

\newcommand{\cK}{{\cal K}}
\newcommand{\cL}{{\cal L}}
\newcommand{\cM}{{\cal M}}
\newcommand{\cN}{{\cal N}}

\newcommand{\cP}{{\cal P}}

\newcommand{\bv}{\begin{verbatim}}
\newcommand{\ev}{\end{verbatim}}

\newcommand{\E}{\ensuremath{\mathbb{E}}}

\newcommand{\C}{\ensuremath{\mathbb{C}}}

\newcommand{\R}{\ensuremath{\mathbb{R}}}

\newcommand{\Tr}{\mathrm{Tr}}
\newcommand{\ben}{\begin{enumerate}}
\newcommand{\een}{\end{enumerate}}
\newcommand{\be}{\begin{equation}}
\newcommand{\ee}{\end{equation}}
\newcommand{\bea}{\begin{eqnarray}}
\newcommand{\eea}{\end{eqnarray}}
\newcommand{\beaa}{\begin{eqnarray*}}
\newcommand{\eeaa}{\end{eqnarray*}}
\newcommand{\ba}{\begin{align}}
\newcommand{\ea}{\end{align}}
\newcommand{\baa}{\begin{align*}}
\newcommand{\eaa}{\end{align*}}
\newcommand {\bd}{\begin{description}}
\newcommand {\ed}{\end{description}}
\newcommand{\bi}{\begin{itemize}}
\newcommand{\ei}{\end{itemize}}
\newcommand{\bc}{\begin{center}}
\newcommand{\ec}{\end{center}}
\newcommand {\bq}{\begin{quotation}}
\newcommand {\eq}{\end{quotation}} 
\newcommand {\bdo}{\begin{document}} 
\newcommand {\edo}{\end{document}} 
\newcommand{\eql}{\: = \:}
\newcommand{\dfl}{\: \equiv \:}
\newcommand{\onder}[2]{#1_{\mbox{\scriptsize #2}}}
\newcommand {\boven}[2]{#1^{\mbox{\scriptsize #2}}}
\newcommand{\argmin}[1]{\mathop{\rm argmin}\limits_{#1}}
\newcommand{\klein}[2]{#1_{\mbox{\tiny #2}}}
\newcommand{\doo}{\mbox{do}}
\newcommand{\fracd}[2]{{{\displaystyle #1} \over {\displaystyle #2}}}

\newcommand{\sm}[2]{#1 _{\setminus #2}}
\newcommand{\smi}[1]{\sm{#1}{i}}
\newcommand{\smk}[1]{\sm{#1}{k}}
\newcommand{\intarg}[2]{\!#2\,d#1}
\newcommand{\del}[3]{\left#1 #3 \right#2}
\newcommand{\av}[1]{\del{<}{>}{#1}}
\newcommand{\avv}[1]{\del{<<}{>>}{#1}}
\newcommand{\bra}[1]{\del{<}{|}{#1}}
\newcommand{\ket}[1]{\del{|}{>}{#1}}
\newcommand{\avmini}[1]{\del{<}{>_{\setminus i}}{#1}}
\newcommand{\avminij}[1]{\del{<}{>_{\setminus i,j}}{#1}}
\newcommand{\sminij}{s_{\setminus i,j}}
\newcommand{\ui}{\avmini{u_i^2}}
\newcommand{\hi}{\avmini{h_i}}
\newcommand{\br}[1]{\del{\{}{\}}{#1}}
\newcommand{\bl}[1]{\del{[}{]}{#1}}

\newcommand {\Hxx}{H^{(xx)}}
\newcommand {\Hxy}{H^{(xy)}}
\newcommand {\Hyx}{H^{(yx)}}
\newcommand {\Hyy}{H^{(yy)}}
\newcommand {\Cxx}{C^{(xx)}}
\newcommand {\Cxy}{C^{(xy)}}
\newcommand {\Cyx}{C^{(yx)}}
\newcommand {\Cyy}{C^{(yy)}}
\newcommand {\cx}{c^{(x)}}
\newcommand {\hy}{h^{(y)}}
\newcommand{\bt}{\bar{t}}
\newcommand {\bx}{\bar{x}}
\newcommand {\bX}{\bar{X}}
\newcommand {\by}{\bar{y}}
\newcommand {\bw}{\bar{w}}
\newcommand {\bh}{\bar{h}}
\newcommand {\vs}{\vec{s}}
\newcommand {\vbfs}{{\bf s}}
\newcommand {\vv}{\vec{v}}
\newcommand {\vlambda}{\vec{\lambda}}
\newcommand {\vh}{\vec{h}}
\newcommand {\vk}{\vec{k}}
\newcommand {\vz}{\vec{z}}
\newcommand {\vx}{\vec{x}}
\newcommand {\valpha}{\vec{\alpha}}
\newcommand {\vm}{\vec{m}}
\newcommand {\vmu}{\vec{\mu}}
\newcommand {\vtheta}{\vec{\theta}}
\newcommand {\vgamma}{\vec{\gamma}}
\newcommand {\qd}{q_{\partial_i}}
\newcommand {\qm}{q_{\setminus i}}
\newcommand{\vxv}{\vec{x}_v}
\newcommand{\vxh}{\vec{x}_h}
\newcommand {\bfK}{{\bm{K}}}
\newcommand {\bfb}{{\bf b}}
\newcommand {\vy}{\vec{y}}
\newcommand {\bfy}{{\bf y}}
\newcommand {\bfx}{{\bf x}}
\newcommand {\bfA}{{\bm{A}}}
\newcommand {\bfX}{{\bm{X}}}
\newcommand {\bfC}{{\bm{C}}}
\newcommand {\bfH}{{\bf H}}
\newcommand {\bfL}{{\bm{L}}}
\newcommand {\byfI}{{\bm{I}}}
\newcommand {\bfS}{{\bm{S}}}
\newcommand {\bft}{{\bf t}}
\newcommand {\bfm}{{\bf m}}
\newcommand {\bfu}{{\bf u}}
\newcommand {\bfw}{{\bf w}}
\newcommand {\bfW}{{\bf W}}
\newcommand {\bfmu}{{\bf \mu}}
\newcommand {\bfphi}{{\bm{\phi}}}
\newcommand {\bftheta}{{\bm{\theta}}}
\newcommand {\bfeta}{{\bm{\eta}}}
\newcommand {\bfchi}{{\bm{\chi}}}
\newcommand {\bfPhi}{{\bm{\Phi}}}
\newcommand {\bfLambda}{{\bm{\Lambda}}}
\newcommand {\bfSigma}{{\bm{\Sigma}}}
\newcommand {\vw}{\vec{w}}
\newcommand {\vb}{\vec{b}}
\newcommand {\vu}{\vec{u}}
\newcommand {\vW}{\vec{W}}
\newcommand {\vM}{\vec{M}}
\newcommand {\sign}{\mbox{sign}}
\newcommand {\sech}{\mbox{sech}}
\newcommand {\hmu}{h^{\mu}_j}
\newcommand {\gmu}{g^{\mu}_j}
\newcommand {\gjkmu}{g^{\mu}_{jk}}
\newcommand {\sia}{s_i^\alpha}
\newcommand {\sjb}{s_j^\beta}
\newcommand {\hia}{h_i^\alpha}
\newcommand {\hjb}{h_j^\beta}
\newcommand {\mia}{m_i^\alpha}
\newcommand {\mib}{m_i^\beta}
\newcommand {\mjb}{m_j^\beta}
\newcommand {\wijab}{w_{ij}^{\alpha\beta}}
\newcommand {\chiijab}{\chi_{ij}^{\alpha\beta}}
\newcommand {\Cijab}{C_{ij}^{\alpha\beta}}
\newcommand {\Hijab}{H_{ij}^{\alpha\beta}}
\newcommand {\he}{h_{\mbox{ext}}}
\newcommand {\xia}{x_{i,\alpha}}
\newcommand {\vxk}{\vec{x}_{\mbox{{\small known}}}}
\newcommand {\vxu}{\vec{x}_{\mbox{{\small unknown}}}}
\newcommand {\Pab}{P_{\alpha\beta}}
\newcommand {\dab}{\delta^{\alpha\beta}}
\newcommand {\dij}{\delta_{ij}}
\newcommand {\erf}{\mbox{Erf}}

\title{A Bayesian formulation of hybrid quantum-classical  dynamics}
\author{H.J. Kappen}
\begin{document}
\maketitle
\begin{abstract}
We develop a Bayesian formulation of diffusive quantum-classical dynamics by treating the wave function and classical variables as components of an ordinary stochastic process. The joint probability density $\cP(\psi,x,t)$ obeys a classical Fokker-Planck equation, while the quantum state appears as its second moment. Requiring this second moment to evolve linearly and autonomously yields the hybrid Lindblad equation and its stochastic unravelings. This construction makes positivity and unraveling freedom immediate and gives a unified description of quantum noise, classical noise, and their correlations through the covariance matrices $(C,\Gamma,Q)$.

The same stochastic representation turns quantum-classical state estimation into a classical hidden-state inference problem. Filtering and smoothing are Bayesian conditioning on the observed classical trajectory. We recover the stochastic master equation from the Kushner-Stratonovich equation with correlated noise and show how the quantum effect operator is related to the Bayesian backward message through the adjoint dynamics of the linear unraveling. The Bayesian posterior also defines a smoothed density matrix and, more generally, a posterior distribution over latent quantum-classical trajectories.

These quantities can be approximated with standard particle filtering and smoothing methods. Numerical examples show that smoothing improves reconstruction of a hidden quantum-classical trajectory and that the full trajectory posterior can retain structure, such as multimodality, that is absent from its density-matrix second moment. The resulting framework connects quantum filtering, retrodiction, and smoothing to the standard forward-backward machinery of Bayesian time-series inference.\end{abstract}
%\tableofcontents
\section{Introduction}

Open quantum systems admit two closely related descriptions. At the ensemble level, the state is represented by a density matrix evolving according to a Lindblad equation. At the trajectory level, the same dynamics can be unraveled into stochastic evolutions of a wave function conditioned on a classical measurement record. The first description emphasizes the linear evolution of quantum states; the second suggests a different viewpoint: a continuously monitored quantum system can be regarded as a stochastic dynamical system with hidden and observed variables.

In this paper we develop this second viewpoint systematically for diffusive continuous-time unravelings driven by Wiener noise. We describe the quantum wave function $\psi_t$
 together with a classical variable $x_t$
 as an ordinary stochastic process on the enlarged state space $(\psi,x)$. Its probability density $\cP(\psi,x,t)$ obeys a classical Fokker-Planck equation, and the joint quantum-classical density operator appears as the second moment
\[\rho_J(x,t)=\int d\psi \cP(\psi,x,t)\psi\psi^\dagger\]

We then ask under what conditions this second moment evolves linearly and autonomously. This requirement is sufficiently restrictive to recover the hybrid Lindblad equation together with its family of stochastic unravelings. Positivity of $\rho_J$
 is automatic because it is a second moment of an ordinary probability distribution.

This probabilistic representation has a second consequence which is the main focus of the present work. Once $x_{0:t}$
 is regarded as observed and the quantum trajectory as latent, the hybrid dynamics becomes a hidden Markov model. Quantum state estimation can then be formulated directly as Bayesian inference. The filtered distribution
\[
\cP(\psi_t\mid x_{0:t})
\]
describes what can be inferred causally from the measurement record, whereas
\[
\cP(\psi_t\mid x_{0:T})
\]
with $t\le T$ incorporates both past and future observations. More generally, one obtains a posterior over complete latent quantum-classical trajectories. Filtering and smoothing are therefore not separate quantum constructions in this representation, but ordinary conditional probabilities on the enlarged stochastic state space.

The usual quantum filtering objects emerge naturally from this Bayesian description. The filtered density matrix is the second moment of the filtered posterior,
\[
\rho_t^F=\int d\psi \cP(\psi_t\mid x_{0:t}) \psi\psi^\dagger
\]
and its evolution follows from the classical Kushner-Stratonovich equation, with the correlation between process and observation noise retained explicitly. The linear and norm-preserving unravelings are related by a change of measure. The unnormalized linear filter evolves linearly and its trace gives the likelihood ratio of the observed record under the normalized and linear unravelings. Its adjoint evolution produces the retrograde effect operator $E_t$.
We show that this quantum effect is the operator representation of the ordinary Bayesian backward message $\cP(x_{t:T}\mid \psi_t,x_t)$. 

The same Bayesian construction also clarifies what is meant by quantum smoothing. On the latent state space the smoothed posterior satisfies the standard forward-backward relation
\[
\cP(\psi_t\mid x_{0:T})\propto \cP(\psi_t\mid x_{0:t})\cP(x_{t:T}\mid \psi_t,x_t)
\]
and therefore defines the smoothed second moment
\[
\rho_t^S =\int d\psi \cP(\psi_t\mid x_{0:T}) \psi\psi^\dagger
\]
This object should be distinguished from the past quantum state of \cite{gammelmark_past_2013}. The past quantum state $\rho_t^F,E_t)$ gives retrodictive probabilities for measurements hypothetically performed at time t, for which the measurement backaction must be included. The smoothed density matrix considered here instead estimates the latent quantum state in the absence of this additional measurement. The two constructions answer different inference questions and are therefore complementary rather than competing.

This viewpoint is closely related to several established lines of work. Stochastic quantum trajectories have a long history, beginning with early work on continuously monitored and open quantum systems \cite{davies1969quantum,gisin1984quantum,ghirardi1986unified,diosi1989models,diosi2011gravity}. Hybrid quantum-classical dynamics and its relation to completely positive evolution have recently received renewed attention \cite{oppenheim2022two,oppenheim2023gravitationally,layton2024healthier, diosi_hybrid_2023}. Quantum filtering is commonly formulated in terms of stochastic master equations \cite{wiseman1993quantum,doherty_quantum_2000}, while \cite{tsang_time-symmetric_2009} introduced forward-backward methods for smoothing classical signals coupled to quantum systems. \cite{gammelmark_past_2013} introduced the past quantum state for quantum retrodiction, and \cite{guevara_quantum_2015} developed quantum state smoothing by treating an unobserved environmental measurement record as a latent classical process. A particularly close classical analogue is the Gaussian hybrid model of \cite{zhang_estimating_2020}, where the problem reduces to Kalman filtering and smoothing. The present formulation places these constructions in a common Bayesian latent-variable framework and extends the trajectory-level formulation beyond the linear-Gaussian and saturated-noise settings.

The probabilistic representation is also computationally useful. Since the latent dynamics is an ordinary stochastic process, standard sequential Monte Carlo methods can be used without constructing a separate quantum smoothing algorithm. Particle filtering approximates $\cP(\psi_t,u_t\mid x_{0:t})$, while particle smoothing approximates the posterior over states or complete trajectories conditioned on $x_{0:T}$. This becomes particularly useful away from the saturated-noise limit, where the observed record does not determine a unique pure-state trajectory.

We illustrate two consequences numerically. In the first example a hidden telegraph process modulates a continuously monitored qubit. Smoothing with the complete observation record improves reconstruction of both the hidden classical process and the associated quantum state. In the second example an unobserved environmental channel produces a posterior over pure-state trajectories with two well-separated modes even though their density-matrix average is close to maximally mixed. This illustrates a distinction that is central to the Bayesian formulation: the density matrix is a sufficient object for operational predictions on the quantum system, but it need not retain the full inferential structure of a posterior over latent trajectories in a specified unraveling.

The main contributions of the paper are therefore threefold. First, we derive hybrid Lindblad dynamics and its unravelings from an ordinary stochastic process by demanding linear autonomous evolution of its quantum second moment. Second, we formulate filtering and smoothing of hybrid quantum-classical dynamics as standard Bayesian inference and show explicitly how the stochastic master equation and retrograde effect arise from forward and backward Bayesian messages. Third, we obtain posterior distributions over latent quantum-classical states and trajectories, together with practical particle methods for estimating them. This provides a common probabilistic language for quantum filtering, retrodiction, and trajectory smoothing.

\section{A Bayesian interpretation of quantum-classical hybrid dynamics}
\subsection{Derivation of hybrid Lindblad equation and unravelings from Fokker-Planck equation}\label{fp2lindblad}
We model the open quantum system as a stochastic variable $\psi$ (the wave function in a given basis)  in interacting with a classical variable $x$. We assume a general diffusive dynamics for these two variables, given by the generic stochastic differential equation (SDE)
\bea
d\psi &=& f(x,\psi,t) \psi dt + g_a(x,\psi,t) \psi d\xi_a \nonumber\\
dx&=& h(x,\psi,t)dt + dW \label{fp2lindblad1}
\eea
with $\psi\in \C^d, x\in \R^n$ and $f,g_a\in \C^{d\times d}, a=1,\ldots k_q$,  $h \in \R^n$ and where throughout the paper repeated indices are summed over. $dW_i$ is real-valued and $d\xi_a$ is complex valued 
Wiener noise  with covariance matrices
\bea
\av{d\xi_a d\xi_b^*}=Q_{ab}dt\qquad \av{dW_i dW_{i'}}=C_{ii'}dt\qquad \av{dW_i d\xi_a}=\Gamma_{ia}dt\label{fp2lindblad1b}
\eea
The matrices $Q, C,\Gamma$ may depend on $\psi,x,t$.  

Since Eqs.~\ref{fp2lindblad1} describe a classical stochastic process, the marginal probability density $\cP(\psi,x,t)$ to observe the state $\psi,x$ at time $t$, given that the initial state $\psi_0,x_0$ at time $t_0$ (whose dependence we suppress in the notation), satisfies a Fokker-Planck (FP) equation $\dot \cP = \cL_{\rm F} \cP$ with $\cL_{\rm F} \cP$ given by Eq.~\ref{eq:complex_fokker_planck} below.  We define the joint density matrix  \footnote{We define 
$\int d\psi=\int \prod_{j=1}^d d\operatorname{Re}\psi_j\,d\operatorname{Im}\psi_j$ and $\av{}$ denotes integration with respect to $\cP(\psi,x,t)$ over $\psi$.} 
\bea
\rho_J(x,t) = \int d\psi \cP(\psi,x,t) \psi\psi^\dagger=\av{\psi\psi^\dagger}\label{joint}
\eea
as the second moment of $\cP(\psi,x,t)$. The core postulate of the hybrid quantum-classical dynamics is that $\rho_J(x,t)$ provides a sufficient description of the quantum system in interaction with measurement outcome. $\rho_J(x,t)$ reduces to the marginal density matrix $\rho(t)$ by integrating over $x$:
\bea
\rho(t) = \int dx \rho_J(x,t) =\int d\psi \cP(\psi,t) \psi\psi^\dagger\label{rho}
\eea
with $\cP(\psi,t)=\int dx \cP(\psi,x,t)$. Given the dynamics Eqs.~\ref{fp2lindblad1} and the definition Eq.~\ref{joint} we derive the following result.

%------ chatgpt
\begin{lemm}
\label{lem:rhoJ_dynamics}
$\rho_J(x,t)$ satisfies
\begin{align}
    \frac{\partial\rho_J}{\partial t}
    ={}&
    \left\langle
        f\psi\psi^\dagger
        +
        \mathrm{h.c.}
        +
        Q_{ab}
        g_a\psi\psi^\dagger g_b^\dagger
    \right\rangle -
    \partial_i
    \left\langle
        \Gamma_{ia}g_a\psi\psi^\dagger
        +
        \mathrm{h.c.}
    \right\rangle
    -
    \partial_i
    \left\langle
        h_i\psi\psi^\dagger
    \right\rangle
+
    \frac{1}{2}
    \partial_i\partial_{i'}
    \left\langle
        C_{ii'}\psi\psi^\dagger
    \right\rangle,
    \label{fp2lindblad2}
\end{align}
where $\partial_i=\partial/\partial x_i$.
\end{lemm}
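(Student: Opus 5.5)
The plan is to derive Eq.~\ref{fp2lindblad2} by a duality argument rather than by manipulating the Fokker-Planck operator directly. The defining property of $\cL_{\rm F}$ is that, for any smooth test function $\phi(\psi,x)$ with compact support in $x$,
\[
\frac{d}{dt}\int dx\,d\psi\,\cP\,\phi=\int dx\,d\psi\,\cP\,(\mathcal{A}\phi),
\]
where $\mathcal{A}$ is the Itô generator of the process Eqs.~\ref{fp2lindblad1}. So I compute the Itô differential of the matrix-valued test function $\phi=\chi(x)\,\psi\psi^\dagger$, with $\chi$ an arbitrary scalar test function. I take expectations and then integrate by parts in $x$ to strip off $\chi$. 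Since $\chi$ is arbitrary, what remains is an identity for $\partial_t\rho_J(x,t)$ pointwise in $x$.

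\textbf{Step 1 (Itô product rule).} Treating $\psi$ and $\psi^\dagger$ as independent components,
\[
d(\psi\psi^\dagger)=d\psi\,\psi^\dagger+\psi\,d\psi^\dagger+d\psi\,d\psi^\dagger .
\]
Inserting $d\psi=f\psi\,dt+g_a\psi\,d\xi_a$ and $d\psi^\dagger=\psi^\dagger f^\dagger dt+\psi^\dagger g_b^\dagger d\xi_b^*$, and using $d\xi_a d\xi_b^*=Q_{ab}dt$, the drift of $\psi\psi^\dagger$ is
\[
f\psi\psi^\dagger+\psi\psi^\dagger f^\dagger+Q_{ab}\,g_a\psi\psi^\dagger g_b^\dagger .
\]
The statement's convention is that $\psi\psi^\dagger$-type quadratic terms come only from $\av{d\xi_a d\xi_b^*}$. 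So I will assume $\av{d\xi_a d\xi_b}=0$, i.e.\ the noise is circular, and state this explicitly.

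\textbf{Step 2 (cross terms with $\chi$).} Next,
\[
d\chi=\partial_i\chi\,dx_i+\tfrac12\partial_i\partial_{i'}\chi\,C_{ii'}\,dt .
\]
Then $d(\chi\psi\psi^\dagger)$ has four drift pieces:
\begin{itemize}
\item $\chi$ times the drift from Step 1;
\item $\partial_i\chi\,h_i\,\psi\psi^\dagger$;
\item $\tfrac12\partial_i\partial_{i'}\chi\,C_{ii'}\,\psi\psi^\dagger$;
\item the Itô cross term $\partial_i\chi\,dx_i\,d(\psi\psi^\dagger)$.
\end{itemize}
Using $dW_i\,d\xi_a=\Gamma_{ia}dt$, and hence $dW_i\,d\xi_a^*=\Gamma_{ia}^*dt$ because $dW$ is real, the cross term equals $\partial_i\chi\,(\Gamma_{ia}g_a\psi\psi^\dagger+\mathrm{h.c.})$.

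\textbf{Step 3 (integrate by parts).} Take $\int dx\,d\psi\,\cP$ of these pieces and move the $x$-derivatives from $\chi$ onto $\av{\cdot}(x)$. This is valid because $\chi$ has compact support, or because $\cP$ decays. One derivative gives a minus sign, producing $-\partial_i\av{h_i\psi\psi^\dagger}$ and $-\partial_i\av{\Gamma_{ia}g_a\psi\psi^\dagger+\mathrm{h.c.}}$. Two derivatives give $+\tfrac12\partial_i\partial_{i'}\av{C_{ii'}\psi\psi^\dagger}$. The $\psi$-integration needs no integration by parts at all, because it is already absorbed into the definition of $\av{\cdot}$. This is precisely the advantage of the dual formulation over integrating $\psi\psi^\dagger$ against Eq.~\ref{eq:complex_fokker_planck} directly. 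The left side equals $\int dx\,\chi\,\partial_t\rho_J$. Arbitrariness of $\chi$ then yields Eq.~\ref{fp2lindblad2}.

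\textbf{Main obstacle.} The delicate point is the bookkeeping of the complex noise. I need to verify that the conjugate correlations are consistent with Eq.~\ref{fp2lindblad1b}, namely that $\av{d\xi_a d\xi_b}=0$ and $\av{dW_i d\xi_a^*}=\Gamma_{ia}^*dt$. This is what makes the cross term collapse to ``$\Gamma g\psi\psi^\dagger+\mathrm{h.c.}$'' and removes any $g_a\psi\psi^T g_b^T$-type terms.

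If one instead works directly from the explicit Fokker-Planck operator in real coordinates $(\operatorname{Re}\psi,\operatorname{Im}\psi)$, the same result follows. In that route one integrates by parts in $\psi$ twice, and boundary terms must vanish at $|\psi|\to\infty$. I would note this decay assumption, though the duality route avoids it.
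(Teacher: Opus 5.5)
Your argument is correct, but drop the circular-noise hypothesis $\langle d\xi_a d\xi_b\rangle=0$ that you introduce in Step~1 and call the main obstacle. It is never used, and keeping it would make you prove a strictly weaker lemma. Since $\psi\psi^\dagger$ is linear in $\psi$ and in $\bar\psi$, the It\^o correction is only $d\psi\,d\psi^\dagger=g_a\psi\psi^\dagger g_b^\dagger\,d\xi_a d\xi_b^*$, as your own Step~1 shows. A product $d\xi_a d\xi_b$ cannot appear, and terms like $g_a\psi\psi^Tg_b^T$ would arise only for $\psi\psi^T$, which you never differentiate. This is exactly the paper's remark that $\partial_{\psi_j}\partial_{\psi_m}(\psi_k\bar\psi_l)=0$, so the pseudo-covariance drops out. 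Likewise, $dW_i\,d\xi_a^*=\Gamma_{ia}^*dt$ follows from the reality of $dW_i$ and needs no verification. Restricting to circular noise would exclude real noise, for which $\langle d\xi_a d\xi_b\rangle=Q_{ab}dt\neq0$. Real noise is precisely the choice the paper adopts in Appendix~\ref{noisechoice} and uses in every later section.

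With that removed, your route differs from the paper's mainly in where the integration by parts happens. The paper writes $\cL_{\rm F}$ in Wirtinger form with covariance $K$, pseudo-covariance $S$ and cross-covariance $R$. It integrates $\psi_k\bar\psi_l$ against $\cL_{\rm F}\cP$ over $\psi$ only, moves the $\psi$-derivatives onto $\psi_k\bar\psi_l$, and leaves the $x$-derivatives as total derivatives outside the $\psi$-integral. You instead apply the It\^o generator to $\chi(x)\psi\psi^\dagger$ and integrate by parts in $x$ against the test function. These are dual forms of the same computation. Yours avoids writing out the complex Fokker--Planck operator and gives the identity in weak form in $x$; the paper's makes explicit which noise statistics enter ($K$, $R$, $C$, but not $S$).

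Your claim that duality removes the decay assumption is overstated, though. Because $\chi\psi\psi^\dagger$ is unbounded in $\psi$, the identity $\frac{d}{dt}\E[\phi]=\E[\mathcal{A}\phi]$ needs the It\^o-integral part to be a true martingale. That requires moment bounds on $\psi$, such as $\E\int_0^t\|g_a\psi_s\|^2\|\psi_s\|^2ds<\infty$. This is the probabilistic counterpart of the vanishing boundary terms at $|\psi|\to\infty$ that the paper assumes. It is essentially automatic when $\|\psi\|=1$, but it is a genuine (if standard) requirement for the linear unraveling.
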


\begin{proof}
See Appendix~\ref{prooflemma1}. 
\end{proof}

Note that the dynamical Eq.~\ref{fp2lindblad2} is not 'autonomous', i.e.  a function of $\rho_J$ only, because $f,g_a$ may depend on $\psi$ so that the expectations involve moments higher than $\av{\psi\psi^\dagger}$. But we can show that when we demand that the dynamics of $\rho_J$ is autonomous and linear on $\rho_J$ its form is unique.

\begin{thrm}
\label{thm4}
Assume that the covariance matrices $C$, $Q$ and $\Gamma$ are
independent of $\psi$ and $\Gamma$ has full column rank. 
Assume that in absence of observations, $\rho_J$ satisfies the Lindblad equation $\dot\rho_J = \cL_L(\rho)$ 
where we define the Lindblad super operator
\bea
\cL_L (X) &=&-i[H,X] +Q_{ab} \left(L_a X L_b^\dagger -\frac{1}{2}\{L_b^\dagger L_a, X\}\right) \label{lindblad}
\eea
with $H(x,t)$ a linear Hermitian operator, $L_a(x,t), a=1,\ldots, k_q$ linear operators.

Demanding that the evolution equation for $\rho_J(x,t)$
obtained from Lemma~\ref{lem:rhoJ_dynamics} is linear
and autonomous in $\rho_J$, it must be of the hybrid Lindblad form
\bea
\dot \rho_J&=& \cL_L (\rho_J) -\partial_i \left( f_i^c\rho_J +[\Gamma_{ia} L_a\rho_J  +h.c.]\right) + \frac{1}{2} \partial_i\partial_{i'}\left( C_{ii'}\rho_J\right)\label{fp2lindblad3}
\eea 
We refer to Eq.~\ref{fp2lindblad3} as the hybrid dynamics.
The unraveling realizing this equation is of the form
\bea
f=-i H-\frac{1}{2}Q_{ab} \left(L_b^\dagger L_a  - 2 L_a c_b^*+c_a c_b^*I\right)\qquad g_a= L_a -c_aI\qquad h_i=f^c_i +[\Gamma_{ia}c_a+h.c.]\label{fp2lindblad4a}
\eea
with $f^c(x,t)$ a real function that we call the classical force, $I$ is the identity operator, and $c_a(\psi,x,t) \in \C$ arbitrary functions to be specified later. The unraveling is unique up to a unitary transformation of $g_a, d\xi_a$. 
\end{thrm}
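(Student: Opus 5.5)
The plan is to start from Lemma~\ref{lem:rhoJ_dynamics} and match its right-hand side, term by term in the order of $x$-derivatives, against a general linear autonomous generator. Because $\Gamma$, $C$, $Q$ do not depend on $\psi$, they can be taken out of the averages. The second-derivative term $\frac12\partial_i\partial_{i'}(C_{ii'}\rho_J)$ is then already linear and autonomous, so it gives the diffusion term of Eq.~\ref{fp2lindblad3} directly. What remains are the zeroth-order terms $\langle f\psi\psi^\dagger+\mathrm{h.c.}+Q_{ab}g_a\psi\psi^\dagger g_b^\dagger\rangle$ and the first-order drift terms $-\partial_i\langle \Gamma_{ia}g_a\psi\psi^\dagger+\mathrm{h.c.}+h_i\psi\psi^\dagger\rangle$. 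Since $\cP$ is arbitrary, $\rho_J$ and $\partial_i\rho_J$ can be varied independently at a given $x$. Linearity and autonomy therefore have to hold separately at each derivative order: the non-derivative part must equal $\cL_0(\rho_J)$ for a linear map $\cL_0$, and the quantity inside $\partial_i$ must equal $\cM_i(\rho_J)$ for linear maps $\cM_i$.

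\textbf{Zeroth order.} By the assumption that without observations $\rho_J$ obeys the Lindblad equation, $\cL_0=\cL_L$ with the given $H,L_a$. I will make the ansatz $g_a=L_a-c_aI$, with $c_a(\psi,x,t)$ a scalar, and $f=-iH-\frac12Q_{ab}(L_b^\dagger L_a-2L_ac_b^*+c_ac_b^*I)$. A direct expansion of $Q_{ab}g_a\psi\psi^\dagger g_b^\dagger$ should then show that every $c$-dependent term cancels against the $c$-terms in $f+f^\dagger$, leaving $\cL_L(\psi\psi^\dagger)$ pointwise in $\psi$; averaging gives $\cL_L(\rho_J)$.

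The converse direction, that this is the only possible form, is the step I expect to be the main obstacle. The idea is to use the fact that $\psi\psi^\dagger$ ranges over all rank-one matrices. The pointwise identity $f\psi\psi^\dagger+\psi\psi^\dagger f^\dagger+Q_{ab}g_a\psi\psi^\dagger g_b^\dagger=\cL_L(\psi\psi^\dagger)$ must then hold up to terms that vanish under every average, hence up to nothing. I will argue this via the uniqueness of the Kraus/Lindblad representation: the completely positive part $Q_{ab}g_a(\cdot)g_b^\dagger$, restricted to the rank-one input $\psi\psi^\dagger$, can only differ from $Q_{ab}L_a(\cdot)L_b^\dagger$ by (i) unitary mixing of the $g_a$, which is absorbed into $d\xi_a$, and (ii) shifts of $g_a\psi$ by multiples of $\psi$. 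Such shifts are exactly the $c_aI$ freedom, and they are compensated by the corresponding term in $f$. This is also where the statement that the unraveling is unique up to a unitary transformation comes from. A fully rigorous version would have to deal with how $g_a$ depends on $\psi$, so I will only aim to show this at the level of the action on each $\psi$.

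\textbf{First order.} With $g_a=L_a-c_aI$, the drift term becomes $\Gamma_{ia}L_a\rho_J+\mathrm{h.c.}-\langle(\Gamma_{ia}c_a+\mathrm{h.c.}-h_i)\psi\psi^\dagger\rangle$. For this to be linear and autonomous, the scalar $h_i-(\Gamma_{ia}c_a+\mathrm{h.c.})$ must not depend on $\psi$. Otherwise it weights $\psi\psi^\dagger$ nonlinearly, and by varying $\cP$ one produces a contradiction. So this scalar equals a real function $f^c_i(x,t)$, which gives $h_i=f^c_i+[\Gamma_{ia}c_a+\mathrm{h.c.}]$. The full-column-rank assumption on $\Gamma$ is what guarantees that the drift $\Gamma_{ia}g_a$ fixes $g_a$ up to the scalar shift. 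In other words, no part of $g_a$ can be hidden from the classical drift by lying in the kernel of $\Gamma$. This makes the identification of $g_a$ obtained at zeroth order consistent with the first-order term. Putting the three orders together yields Eq.~\ref{fp2lindblad3} with the unraveling Eq.~\ref{fp2lindblad4a}.
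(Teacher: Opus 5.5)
Your proposal follows the paper's proof essentially step by step. The zeroth-order part is fixed to $\cL_L$ by direct substitution of $f,g_a$; the paper isolates it by switching off $h_i,\Gamma,C$, while you vary $\rho_J$ and $\partial_i\rho_J$ independently. The first-order part forces $h_i-(\Gamma_{ia}c_a+\mathrm{h.c.})=f^c_i(x,t)$. Full column rank of $\Gamma$ is used, as in the paper, so that a $\psi$-dependent unitary remixing of the $g_a$ cannot survive the linearity requirement on the drift and can therefore be absorbed into $d\xi_a$.

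Your rank-one argument for necessity at zeroth order is more explicit than the paper, which only verifies sufficiency and remarks on the $U_{ab}$ freedom. Note that this argument also leaves a pure-phase freedom $f\to f+i\theta(\psi,x,t)I$ with $\theta$ real. It does not affect $\psi\psi^\dagger$, and neither you nor the paper accounts for it.
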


\begin{proof} 
We first suppress the classical dynamics and consider the quantum
process at fixed $x$ so that $h_i =dW_i=C_{ii'}=\Gamma_{ia}=0$.
Eq.~\ref{fp2lindblad2} becomes 
\bea
\dot \rho_J&=& \av{f \psi\psi^\dagger+h.c.} +Q_{ab}\av{g_a \psi\psi^\dagger g_b^\dagger}\label{fp2lindblad6}
\eea
which should coincide with the Lindblad equation 
\bea
\dot{\rho}_J=\cL_L(\rho_J).\label{lindblad1}
\eea
Direct substitution of $f,g_a$ from Eq.~\ref{fp2lindblad4a} into Eq.\ref{fp2lindblad6} yields Eq.~\ref{lindblad1}. 
 \footnote{
 \beaa
 Q_{ab}\av{g_a \psi\psi^\dagger g_b^\dagger}&=&Q_{ab}\left(L_a \rho_J L_b^\dagger-L_a \av{\psi\psi^\dagger c_b^*}-\av{c_a \psi\psi^\dagger} L_b^\dagger+\av{c_a c_b^* \psi\psi^\dagger}\right)\\
\av{f\psi\psi^\dagger +h.c. }&=&-i[H,\rho_J]-\frac{1}{2}Q_{a b} \{L_b^\dagger L_a, \rho_J\}  +[Q_{ab}L_a \av{c_b^* \psi\psi^\dagger }+h.c.] -Q_{ab} \av{c_a c_b^* \psi_\psi^\dagger}\\
\av{f\psi\psi^\dagger +h.c. +Q_{ab}g_a \psi\psi^\dagger g_b^\dagger}&=&\cL_L(\rho_J) 
 \eeaa
 }
The stochastic dynamics for $\psi$ becomes
\bea
d\psi&=&-i H \psi dt -\frac{1}{2}Q_{ab} \left(L_b^\dagger L_a  - 2c_b^* L_a +c_a c_b^*I\right)\psi dt + (L_a -c_aI)\psi d\xi_a\label{sse3}
\eea
We refer to Eq.~\ref{sse3} as  the stochastic Schr\"odinger equation (SSE).

We now consider the case that $h_i$ and $dW_i$ are non-zero. The above argument ensures that the first two terms in Eq.~\ref{fp2lindblad2} are linear in $\rho_J$. The last term is also autonomous because $C$ is independent of $\psi$. The remaining terms are
\beaa
-
    \partial_i
    \left\langle
        \Gamma_{ia}g_a\psi\psi^\dagger
        +
        \mathrm{h.c.}
    \right\rangle
    -
    \partial_i
    \left\langle
        h_i\psi\psi^\dagger
    \right\rangle=
    -
    \partial_i
    \left\langle
        \Gamma_{ia} (L_a-c_a) \psi\psi^\dagger
        +
        \mathrm{h.c.}
    \right\rangle
    -
    \partial_i
    \left\langle
        h_i\psi\psi^\dagger
    \right\rangle
\eeaa
The first term on the rhs is autonomous and linear in $\rho_J$ and the remaining terms are
of the form $\partial_i \av{(\Gamma_{ia}c_a+h.c.-h_i)\psi\psi^\dagger}$. 
The expression becomes autonomous and linear in $\rho_J$ if and only if $\Gamma_{ia}c_a+h.c.-h_i=-f_c(x,t)$ is independent of $\psi$ which is Eq.~\ref{fp2lindblad4a}. We thus obtain Eq.~\ref{fp2lindblad3} and 
\bea
dx_i &=&f^c_idt + [ \Gamma_{ia}c_a+h.c.]dt+dW_i\label{dx}
\eea

As far as the correspondence with the Lindblad equation is concerned, the choice of $g_a$ could be generalized  to  $g_a =U_{ab} (L_b-c_b)$  with $U_{ab}$ a unitary matrix depending on $\psi$ provided that $U^\dagger Q U =Q$ because it leaves $Q_{ab}\av{g_a \psi\psi^\dagger g_b^\dagger}$ invariant. However, in order to ensure the linearity in $\rho_J$ when $h_i, dW_i$ are non-zero, the assumption that $\Gamma$ has maximal column rank then implies that $U$ cannot depend on $\psi$. Then $U$ can be absorbed by redefining $\xi'_b=U_{ab}\xi_a$. Therefore the form of the unraveling Eq.~\ref{fp2lindblad4a} unique up to a unitary transformation of $d\xi_a$. 
\end{proof}

The hybrid dynamics Eq.~\eqref{fp2lindblad3}, together with the
unravelings Eqs.~\eqref{sse3} and~\eqref{dx}, gives a general class
of diffusive linear quantum--classical dynamics. These equations 
agree with well-known results for diagonal $Q$ \cite{gisin1984quantum,percival1998quantum,doherty_quantum_2000,semina2014stochastic,barchielli2009quantum,wiseman2009quantum} as well as more recent results \cite{layton2024healthier,diosi_hybrid_2023}.
%\footnote{Eq.~\ref{fp2lindblad3} agrees with \cite{layton2024healthier} Eq. 14 and Eqs.~\ref{sse3} and~\ref{dx}  with  \cite{layton2024healthier}  Eqs.16 and 18 with $D_0^{\alpha\beta}\leftrightarrow Q_{ab}, D_{2,ij}^{00}\leftrightarrow C_{ij}, D_{1,i}^{0\alpha} \leftrightarrow \Gamma_{ia}, D_{1,i}^{00} \leftrightarrow f_i^c$ where \cite{layton2024healthier} restricts to the case where the quantum and classical noise are linearly related as $d\xi_a=\Gamma_{ia} C^{-1}_{ii'} dW_{i'}$ such that it saturates the noise condition Eq.~\ref{saturation}.}
%\footnote{Partial agreement. Eq.~\ref{fp2lindblad3} agrees with \cite{diosi_hybrid_2023} Eq. 36 with $D_0^{\alpha\beta}\leftrightarrow Q_{ab}, D_C^{nm}\leftrightarrow C_{ii'}, G_{CQ}^{na}\leftrightarrow \Gamma_{ia}, V^n \leftrightarrow f_i^c$. The unravelings Eqs.~\ref{sse3} and~\ref{dx}  do not
%agree with  \cite{diosi_hybrid_2023}  Eqs. 48-50 (NB the Erratum \cite{diosi2023erratum}). 
%In the case of a single Lindblad operator his Eq. 48 reduces to $-iH_\text{fr}=-\frac{1}{2}Q\left(L^\dagger L -2c L^\dagger +c^*c\right)$ which does not agree our result or with the other 
%diagonal results $-iH_\text{fr}=-\frac{1}{2}Q\left(L^\dagger L -2c^* L +c^*c\right)$.
%}
The marginal Lindblad equation Eq.~\ref{lindblad} is directly obtained from Eq.~\ref{rho} and integrating Eq.~\ref{fp2lindblad3} over $x$.

Thm.~\ref{thm4} is valid for any choice of the scalar $c_a$. The choice of $c_a$ affects the unraveling Eqs.~\ref{sse3} and~\ref{dx} but not the hybrid dynamics Eq.~\ref{fp2lindblad3}. 
In other words, different choices of $c_a$ define different unravelings. 
In order to specify $c_a$ we consider the evolution of the norm $\|\psi\|^2$. Define the stochastic variable $P_\psi=\psi\psi^\dagger$. Using It\^o calculus, the dynamics of $P_\psi$ is
%\footnote{
%\bea
%dP&=&d\psi \psi^\dagger +h.c. +d\psi d\psi^\dagger \nonumber\\
%&=& -i[H,P]dt -\frac{1}{2}Q_{ab}\{L_b^\dagger L_a,P\}dt +Q_{ab} (L_a c_b^* P + P L_b^\dagger c_a)dt -Q_{ab} c_a c_b^* Pdt +[(L_a -c_a) P d\xi_a + h.c ]\nonumber\\
%&&+Q_{ab}(L_a P L_b^\dagger -c_a P L_b^\dagger -L_a P c_b^* +c_a c_b^* P)dt \nonumber\\
%&=& \cL_L(P) + [(L_a -c_a) P d\xi_a + h.c ]\label{sme3}
%\eea
%}
\bea
dP_\psi&=&d\psi \psi^\dagger +h.c. +d\psi d\psi^\dagger = \cL_L(P_\psi) + [(L_a -c_a) P_\psi d\xi_a + h.c ]\label{sme3}
\eea
with $\cL_L$ given by Eq.~\ref{lindblad}.
Note that by taking the expectation value of Eq.~\ref{sme3} we directly obtain Eq.~\ref{lindblad}. 
When $\Tr (P_\psi)=1$, the change in the norm is given as
\beaa
d\left(\|\psi\|^2 \right) =\Tr (dP_\psi)= \Tr[(L_a -c_a)P_\psi]d\xi_a + h.c.=\left(\psi^\dagger L_a \psi-c_a\right)d\xi_a +h.c.
%\label{normalization}
\eeaa
In particular,
note that $d\Tr (\rho_J) =\av{\Tr(dP_\psi)}=0$, i.e. $\Tr (\rho_J)$ is preserved for any choice of $c_a$ because $\av{d\xi_a}=0$. 

We consider two choices. 
\ben
\item {\bf Non-linear unraveling}. We may demand that the norm of any stochastic trajectory is conserved $\Tr(dP)=0$ by choosing $c_a =\psi^\dagger L_a \psi$. Note that this choice makes Eq.~\ref{sse3} or Eq.~\ref{sme3} non-linear in $\psi$. 
\footnote{While the choice $c_a =\psi^\dagger L_a \psi$ is sufficient to preserve $\|\psi\|$, it is not always necessary. For instance, when $Q$ is a real rather than Hermitian, 
we can choose $d\xi_a$ real. In this case the choice $c_a=\frac{1}{2}\psi^\dagger\left(L_a +L_a^\dagger\right)\psi$ also preserves the norm. In particular, when $L_a$ is anti-Hermitian, the unravelings are linear and norm preserving. This option was exploited in \cite{villanueva2024a}.}
\item {\bf Linear unraveling}. We may set $c_a=0$ in which case the unravelings Eq.~\ref{sse3} or Eq.~\ref{sme3} become linear in $\psi$. In this case the norm is not preserved for individual trajectories, $\Tr(dP_\psi)\ne 0$, but for the average $\av{d\Tr\left(P_\psi\right)}=0$. Since $c_a=0$, Eq.~\ref{dx} may suggest that the quantum dynamics does not depend on the classical state $x$. However, note that usually $\Gamma_{ia}=\av{dW_i d\xi_a}\ne 0$, so that the quantum and classical dynamics co-evolve through the correlated diffusion. 
\een
To cover both the linear and non-linear unravelings we define $c_a =\eta \psi^\dagger L_a \psi$ with $\eta=0,1$, respectively.

The definition Eq.~\ref{joint} automatically ensures positivity of $\rho_J$. This is a necessary consequence because our starting point is a stochastic dynamics. In other derivations that are based on completely positive maps and Kraus operators the positivity of $\rho_J$ needs to be derived separately. See  \cite{diosi_hybrid_2023} and references there.

%Define $p^1=zp^0$ with $\dot p^0 =\frac{1}{2}C_{ij}\partial_{ij} p^0$. Then
%\beaa
%\dot p^1&=& -\partial_i (f_i p^1) +\frac{1}{2}C_{ij}\partial_i\partial_j p^1\\
%\dot z p^0+z \dot p^0 &=& -(\partial_if_i)zp^0-f_i (\partial_i z) p^0 -f_i z (\partial_i p^0)+\frac{1}{2}C_{ij} \left(z\partial_i \partial_j p^0 + p^0 \partial_i \partial_j z +2 \partial_i p^0 \partial_j z\right)\\
%\dot z &=& -(\partial_if_i)z-f_i (\partial_i z) -f_i z (\partial_i E^0) +\frac{1}{2}C_{ij}\left(\partial_i \partial_j z +2 \partial_i E^0 \partial_j z\right)\\
%\eeaa
%where we used $\dot p^0 =\frac{1}{2}C_{ij}\partial_{ij} p^0$ and defined $E(x,t) =\log p^0(x,t)$. Since $p^0(x,t)=\cN(x|0,Ct)$ we compute

%\footnote{
%\beaa
%\av{g_a \psi\psi^\dagger g_b^\dagger}_J D^{ab}_Q =D^{ab}_Q\left( L_a \rho_J L_b^\dagger  +\av{c_a c_b\psi\psi^\dagger}_J\right) -D^{ab}_Q\av{c_a (L_b \psi\psi^\dagger +\psi\psi^\dagger L_b^\dagger)}_J\\
%\av{f\psi\psi^\dagger +h.c.}_J =-i[H,\rho_J] -\frac{1}{2} \left(L_b^\dagger L_a \rho_J +\rho_J L_a^\dagger L_b\right) D^{ab}_Q + \av{c_a (\psi\psi^\dagger L_b^\dagger +L_b \psi\psi^\dagger)}_J D^{ab}_Q-\av{c_ac_b\psi\psi^\dagger}_J D^{ab}_Q
%\eeaa
%}

\subsection{Relation between linear and non-linear unraveling}
\label{section:linear_nonlinear}

We now make explicit the relation between the linear ($\eta=0$) and
norm-preserving ($\eta=1$) unravelings.  The central object is the
stochastic norm of the linear wave function,
\begin{equation}
    Z_t := \|\psi_t\|^2 = \Tr P_{\psi_t},
    \qquad P_{\psi_t}=\psi_t\psi_t^\dagger .
    \label{Zdef}
\end{equation}
For $\eta=1$ the dynamics preserves the norm and $Z_t=1$.  For
$\eta=0$, Eq.~\ref{sse3} gives
\begin{equation}
    dZ_t
    =
    \Tr(dP_{\psi_t})
    =
    \psi_t^\dagger L_a\psi_t\,d\xi_{a,t}
    +h.c.,
    \qquad (\eta=0),
    \label{Zdynamics}
\end{equation}
so that $Z_t$ fluctuates along individual trajectories but is
preserved on average.  Under the usual integrability conditions,
$Z_t$ is therefore a positive martingale.  We assume
$\|\psi_0\|^2=1$ in the following.

The martingale $Z_t$ provides the change of measure between the
linear and norm-preserving unravelings.  Let
$\mathbb P^{(0)}$ denote the path measure of the linear process and
$\mathbb P^{(1)}$ the physical path measure of the normalized process.
On the filtration $\mathcal F_t$ generated by the joint process up to
time $t$, define
\begin{equation}
    \left.
    \frac{d\mathbb P^{(1)}}{d\mathbb P^{(0)}}
    \right|_{\mathcal F_t}
    =
    Z_t .
    \label{path_change_measure}
\end{equation}
Thus the stochastic norm of a trajectory under the linear unraveling
is transferred into its probability weight under the normalized
unraveling.  Equivalently, the normalized process may be regarded as
a change of measure followed by the projection
\begin{equation}
    \psi_t \longmapsto
    \phi_t:=\hat\psi_t=\frac{\psi_t}{\|\psi_t\|}.
    \label{normalization_map}
\end{equation}

The local form of Eq.~\ref{path_change_measure} is particularly
useful.  Conditioned on the present state $\psi_t=\psi$, the
Radon--Nikodym derivative over one time step is
\begin{equation}
    \frac{Z_{t+dt}}{Z_t}
    =
    \frac{\|\psi'\|^2}{\|\psi\|^2},
    \qquad
    \psi'=\psi_{t+dt}.
    \label{local_RN}
\end{equation}
Consequently the transition kernel of the normalized process is the
Doob transform of the linear transition kernel.  Since the two
kernels live on different state spaces, it is most conveniently
written in weak form: for any test function $F$ on normalized states,
\begin{equation}
\begin{split}
    &\int d\phi'\,
    F(\phi')\,
    \cP^{(1)}(\phi',dx\mid\phi,x)
    \\
    &\qquad =
    \int d\psi'\,
    F(\hat\psi')\,
    \frac{\|\psi'\|^2}{\|\psi\|^2}\,
    \cP^{(0)}(\psi',dx\mid\psi,x),
    \qquad
    \phi=\hat\psi .
\end{split}
\label{doob_transform}
\end{equation}
In words, a transition of the normalized process is obtained by
tilting a transition of the linear process by the increment
$Z_{t+dt}/Z_t$ and then normalizing the final wave function.  Setting
$F=1$ in Eq.~\ref{doob_transform} gives
\begin{equation}
    \mathbb E^{(0)}
    \left[
        Z_{t+dt}\mid\psi_t=\psi,x_t=x
    \right]
    =
    Z_t,
\end{equation}
which confirms that $Z_t$ is a Martingale: the expected $\|\psi_\|^2$ is conserved by the linear dynamics. 
 We return to this conditional change of measure in section~\ref{sec:retrograde-filter}, when we derive the retrograde filter equation.

\subsection{The noise covariance matrix}
\label{section:noise_covariance}
The covariance matrix of the unravelings Eqs.~\ref{sse3} and~\ref{dx} is of dimension $n+ k_q$ and of the form
\bea
D=\left(\begin{tabular}{cc} $ C$ & $\Gamma$\\ $\Gamma^\dagger$ & $Q$ \end{tabular} \right)\label{D}
\eea
%\footnote{
%Given a mean zero Gaussian distribution $p(x_a,x_b)$ with covariance matrix 
%\beaa
%\Sigma = \left(\begin{tabular}{cc}$\Sigma_{aa}$ & $\Sigma_{ab}$\\$\Sigma_{ba}$ & $\Sigma_{bb}$\end{tabular}\right)
%\eeaa
%The conditional $p(x_a|x_b)$ is mean zero Gaussian with covariance matrix 
%$\Sigma_{a|b}=\Sigma_{aa} - \Sigma_{ab} \Sigma_{bb}^{-1}\Sigma_{ba}$
%
The derivations in \cite{diosi_hybrid_2023,oppenheim2022two,oppenheim2023gravitationally,layton2024healthier} derive the generalized Lindblad equation Eq.~\ref{fp2lindblad3} from complete positivity and imply positive-semidefinite noise matrices $Q,C\ge 0$, together with positivity of the joint covariance matrix 
$D\ge 0$. In the present work we restritct to the nondegenerate case $Q,C>0$ so that the Schur complement condition can be written as
\bea
Q \ge \Gamma^\dagger C^{-1} \Gamma\qquad C \ge \Gamma Q^{-1} \Gamma^\dagger\label{noise_bound}
\eea
Our derivation starts instead from the unraveling dynamics, for which $D\ge 0$ follows directly from the definition of the joint noise covariance in Eq.~\ref{fp2lindblad1b}. The assumption $Q,C>0$ is made only to allow use of ordinary inverses throughout. 
For a given coupling $\Gamma$ between the quantum and classical system, the lower bound $Q_\text{min} = \Gamma^\dagger C^{-1} \Gamma$ shows that the minimum decoherence increases as the classical diffusion decreases, and vice versa.

The bound $Q=\Gamma^\dagger C^{-1} \Gamma$ is saturated (the so-called saturated noise case) when the quantum and classical noise are maximally correlated, i.e. $d\xi = F dW$ with $F$ a (complex) $k_q \times n$ matrix. In this case the quantum increment, and therefore the quantum state, is fully determined by the classical increment $dx$.
In the non-saturated noise case, $\psi(t)$ is a stochastic quantity given by the filtered estimate $\cP(\psi|x_{0:t})$ that we consider in section~\ref{section:filteringsmoothing}. 

Although $d\xi_a \in \C$, we argue in Appendix~\ref{noisechoice} that we can choose $d\xi_a\in \R$ without loss of generality. 

For a single channel, $C,Q$, and $\Gamma$ as defined in Eq.~\ref{fp2lindblad1b} are real scalars. It is customary in the quantum-filtering literature \cite{wiseman1993quantum,doherty_quantum_2000}
 to express their relative strength in terms of the detector efficiency 
\beaa
\eta_{\text{eff}} = \frac{\Gamma^2}{CQ}
\eeaa
$0\le \eta_\text{eff} \le 1$ which follows from the positivity condition Eq.~\ref{noise_bound}. 
When $
C = Q = 1$ we obtain $
\Gamma = \sqrt{\eta_{\rm eff}}$. 
The decomposition of the quantum noise $d\xi$ into a component correlated with the
observation noise $dW$ and an independent residual component then becomes
\begin{equation}
d\xi_t
=
\sqrt{\eta_{\rm eff}}\,dW_t
+
\sqrt{1-\eta_{\rm eff}}\,dV_t,
\end{equation}
where $dV_t$ is a Wiener increment independent of $dW_t$, $\av{dV_t dW_t}=0$. Thus
$\eta_{\rm eff}$ measures the fraction of the quantum-noise variance that is
correlated with the observation noise, while $1-\eta_{\rm eff}$ is the fraction
that remains unobserved.
For $\eta_{\rm eff}<1$,
the observed trajectory $x_{0:T}$ does not uniquely determine the
underlying pure-state trajectory. The component of $d\xi_t$ proportional to
$dV_t$ remains latent even when the complete observation record is known. As we show the numerical examples, 
the posterior distribution over trajectories can then 
be approximated by particle smoothing.

\subsection{Filtering and smoothing}
\label{section:filteringsmoothing}\label{smoothing}
The essence of the previous section is that we have represented the hybrid quantum-classical problem as a purely classical Bayesian problem of stochastic time series. An important consequence is that we can directly apply all available classical stochastic time series methods such as sequential Monte Carlo, filtering and smoothing \cite{doucet2009tutorial} and do not require a separate treatment of the classical and quantum variables as has been done in previous approaches. In addition, it gives a natural definition of a smoothed quantum state for general hybrid dynamics. 

Consider a classical time series model on the time interval $[0,T]$, with observed variables $x_t$ and latent state $z_t$ 
\bea
\cP(z_{0:T},x_{0:T})=\prod_{t=0}^T \cP(dz_t,dx_t\mid z_t,x_t)
\eea
where the latent state $z_t=(\psi_t,u_t)$ contains the quantum state $\psi_t$ and possible other unobserved classical states $u_t$. 
One defines the filter estimate $\cP_t(z_t|x_{0:t})$ as the distribution over the latent state at time $t$ given past observations up to time $t$. One also defines the smoothed estimate $\cP(z_t|x_{0:T})$, with $[0,T]$ the entire observation interval and $0\le t\le T$, as the distribution over the latent state at time $t$ given past and future observations. Because of the Markov structure
\bea
\cP(z_t\mid x_{0:T})\propto \cP(z_t\mid x_{0:t})\cP(x_{t:T}\mid z_t):=\alpha_t(z_t)\beta_t(z_t)\label{alphabeta}
\eea
where $x_{t:T}=\{dx_s : t\le s < T\}$ denotes the future record of observed increments.
$\alpha_t$ is the filtered estimate  and $\beta_t$ is the retrograde filtered estimate, known as forward and backward messages. $\alpha_t$ satisfies a forward recursion relation, and $\beta_t$ satisfies a backward recursion relation. 
From these conditional distributions one defines the {\em filtered and smoothed density matrices }
\bea
\rho_t^F=\int d\psi du P_\psi \cP_t(\psi,u\mid x_{0:t})\qquad \rho_t^S=\int d\psi du P_\psi \cP_t(\psi,u\mid x_{0:T})\label{rhoS}
\eea
The filtered density matrix $\rho_t^F$ is well-known in quantum state estimation. The Bayesian formulation also gives a natural definition of a smoothed density matrix for general hybrid dynamics. This quantity 
was previously proposed by \cite{zhang_estimating_2020} in the limited context of a Gaussian quantum Kalman filter model. Here, we generalize this notion beyond the linear-Gaussian setting considered previously. We illustrate its usefulness in the numerical example in section~\ref{sec:telegraph-example}.

In addition, instead of considering posteriors at a single time one can also estimate the probability of entire trajectories, $\cP(z_{0:T}\mid x_{0:T})$. We give an example in section~\ref{sec:bimodal-example} where the quantum state develops a strong bimodality due to coupling to an unobserved environment, which is detectable by the distribution over trajectories but not by the quantum state estimates $\rho_t^F$ or $\rho_t^S$. 

In section~\ref{relatedwork} we connect the Bayesian formulation to the standard quantum filtering and smoothing literature. We derive the {\em stochastic master equation} that describes the dynamics of the filtered density matrix $\rho_t^F$. In addition, we derive the dynamical equation for the retrograde effect operator $E_t$ and discuss quantum smoothing. 
The quantities $\rho_t^F,\rho_t^S$ and $\cP(z_{0:T}\mid x_{0:T})$ are intractable to compute exactly, but can be estimated by particle filtering, which we describe in section~\ref{section:filtering}.

\section{Relation to previous work}
\label{relatedwork}
We first review some of the literature on quantum filtering, retrograde filtering, smoothing and particle filtering. The subsequent subsections  provide detailed derivations how these concepts are obtained in the Bayesian framework.

The estimation of a classical time-dependent signal from a continuously
monitored quantum system has been studied previously in the context of quantum
sensing \cite{wiseman2009quantum,jacobs2014quantum}.
A full description of the problem involves starting with a prior probability density for the parameters one wishes to determine and then using Bayes' theorem to continually update this probability density from the stream of measurement results as they are obtained.

A number of authors have considered this problem and provided solutions using the {\em filtered estimates} that are provided by the stochastic master equation Eq.~\ref{eq:rho-filter} 
\cite{ralph_frequency_2011,gambetta_state_2001,verstraete_sensitivity_2001,stockton_robust_2004,negretti_estimation_2013,chase_magnetometry_2009}. 
%This is of particular interest when the parameters of a system change slowly with time, and one wishes to be able to track the variations in the parameters. 
%It is also relevant to the problem of using quantum systems as probes to measure time-varying classical fields (such as gravity waves [29] and magnetic fields [30]), as these fields appear as parameters in the Hamiltonian.

{\em Quantum smoothing} was first developed by Tsang for
classical Markov processes coupled to continuously measured quantum systems
\cite{tsang_time-symmetric_2009,tsang_optimal_2009,tsang_optimal_2010}. In this formulation a forward hybrid
quantum--classical state is combined with a backward effect to estimate the
classical process using both past and future measurement records.
Importantly, smoothing is restricted to the estimation of probabilities or expectations of classical variables. 

The {\em smoothing of quantum state} itself has been explored for the case where a Gaussian description of the quantum state is appropriate. In the context of magnetic-field sensing \cite{zhang_estimating_2020} formulated a continuously
monitored atomic-ensemble magnetometer in a hybrid quantum--classical Gaussian
description and showed that, in this setting, the inference problem is
equivalent to Kalman filtering and smoothing. The smoothed distribution is obtained as the product of forward and backward messages as in Eq.~\ref{alphabeta}. The present paper is very similar in spirit, generalizing the Gaussian Kalman filter setting to arbitrary hybrid quantum-classical systems. 

The idea to use {\em particle filtering} for (filtered) quantum state estimation of continuously monitored
systems, was considered by \cite{ralph_multi-parameter_2017}. They use a hybrid approach that combines the
stochastic master equation Eq.~\ref{eq:rho-filter} with sequential Monte Carlo methods to estimate several
unknown, but fixed, Hamiltonian parameters simultaneously with the conditioned quantum state. They do not extend this to obtain smoothed estimates. In contrast, the particle construction considered in the present paper, and illustrated in the examples in section~\ref{numerical},  samples the entire unobserved dynamical hybrid state (quantum wave function, unobserved environment states coupled incoherently to the system, unobserved signals influencing the system coherently) and therefore provides posterior distributions over latent quantum-classical trajectories, making trajectory smoothing and path-dependent posterior quantities directly accessible.

\subsection{Quantum filtering}
\label{quantumfiltering}
In the absence of the variables $u_t$, the filtered density matrix $\rho_t^F$ satisfies a dynamical equation, known as the {\em stochastic master equation}.  We show how  this equation results from our Bayesian framework using It\^o calculus in Appendix~\ref{appendixSME}. We first derive the Kushner-Stratonovich (KS) equation for the filtered distribution over latent states. In this derivation it is essential to include the correlation between the noise in the latent dynamics $d\xi$ and the observation noise $dW$, which is not standard in most time-series models.
We then apply this to compute dynamics of $\rho_t^F$ for both the non-linear and linear unravelings ($\eta=1,0$). 
The result is 
 \begin{align}
\boxed{
\begin{aligned}
 d\rho_t
 =
 &\cL_L(\rho_t)dt+
 \left[
   \Gamma_{ia}(L_a-\bar c_{a,t})\rho_t+\mathrm{h.c.}
 \right]
 C^{-1}_{ij}
 dI_{j,t}\\
 dI_{j,t}=&
   dx_{j,t}-
   \left(
     f_{j,t}^c+\Gamma_{jb}\bar c_{b,t}+\mathrm{h.c.}
   \right)dt
\end{aligned}
}
\label{eq:rho-filter}
\end{align}
where $\pi_t(\psi)=\cP(\psi\mid x_{0:t})$ is the filtered distribution, $\bar c_{a,t}=\pi_t(c_a)$ and $\cL_{L}$ is given by Eq.~\ref{lindblad}.
Eq.~\ref{eq:rho-filter} is autonomous, because it depends only
on \(\rho_t\) and the observed trajectory \(x_{0:t}\). It covers both the linear and the norm-preserving unravelings because $c_a =\eta \psi^\dagger L_a \psi$ with $\eta =0,1$. In the norm-preserving case, Eq.~\ref{eq:rho-filter} is non-linear 
because \(\bar c_{a,t}\) depends on \(\rho_t\). 
Eq.~\ref{eq:rho-filter} is the conditioned
counterpart of the marginal hybrid equation Eq.~\ref{fp2lindblad3}. The
spatial derivative terms are absent because the observed trajectory \(x_{0:t}\) is now
given.
Eq.~\ref{eq:rho-filter} agrees with the result of 
\cite{wiseman1993quantum} who first derived this result using Kraus operators.  Note that 
$\av{dI_i dI_j}=C_{ij}dt$ and $\av{dI_i d\xi_a} = \Gamma_{ia} dt$. 
 Thus $dI_t$ and $dW_t$ have the same quadratic variation, but they are not in
general identical pathwise. The Wiener increment appearing in the usual
stochastic-master-equation formulation corresponds to the innovation $dI_t$,
rather than to the primitive observation noise $dW_t$ of the joint stochastic
dynamics.
See also \cite{doherty_quantum_2000}.

The filtering equation Eq.~\ref{eq:rho-filter} describes the dynamics of the normalized ($\eta=1)$ and unnormalized ($\eta=0$) filtered density matrix, which we denote by $\rho_t$ and $\sigma_t$, respectively. We consider $\rho_t$ the physical quantity of interest.
However, one can show that it can be computed from the $\sigma_t$. Furthermore, $\sigma_t$ also contains information about the data likelihood. 
\begin{lemm}
\label{lemma:linear}
$\rho_t$ and $\sigma_t$ are related as 
\bea
\rho_t&:=&\frac{\sigma_t}{\Tr\sigma_t} \label{eq:rho-sigma-section}\\
\Tr(\sigma_t) &=&\Lambda_t := \frac{\cP^{(1)}(x_{0:t})}{\cP^{(0)}(x_{0:t})} \label{eq:TRsigma}
\eea
Here $\cP^{(0)}(x_{0:t})$ and $\cP^{(1)}(x_{0:t}(x_{0:t})$  denote the likelihood densities of the observed record under the linear and normalized unravelings, respectively. For continuous observation paths the individual likelihood densities depend on the common path-density convention, whereas their ratio does not.
\end{lemm}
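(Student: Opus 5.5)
The plan is to work directly from the filtering equation Eq.~\ref{eq:rho-filter} in its two instances. For $\eta=0$ we have $\bar c_{a,t}=0$, so $\sigma_t$ obeys the linear equation
\[
d\sigma_t=\cL_L(\sigma_t)dt+\left[\Gamma_{ia}L_a\sigma_t+\mathrm{h.c.}\right]C^{-1}_{ij}\left(dx_{j,t}-f^c_{j,t}dt\right).
\]
Here the innovation reduces to $dx-f^c dt$, because under the linear unraveling $h=f^c$ does not depend on $\psi$. For $\eta=1$ we have $\bar c_{a,t}=\Tr(L_a\rho_t)$, and $\rho_t$ obeys the nonlinear equation with innovation $dI_t=dx_t-(f^c+\Gamma\bar c+\mathrm{h.c.})dt$.

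\textbf{Step 1: dynamics of the trace.} Taking the trace of the $\sigma$-equation, $\Tr\cL_L(\cdot)=0$ gives
\[
d\Tr\sigma_t=\left[\Gamma_{ia}\Tr(L_a\sigma_t)+\mathrm{h.c.}\right]C^{-1}_{ij}(dx_j-f^c_jdt).
\]
Write $N_t=\Tr\sigma_t$ and $m_{j,t}=\Gamma_{ja}\Tr(L_a\sigma_t)/N_t+\mathrm{h.c.}$. Then $dN_t=N_t\,m_{i,t}C^{-1}_{ij}(dx_j-f^c_jdt)$, so $N_t$ is a Dol\'eans exponential.

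\textbf{Step 2: $\Lambda_t$ is the same exponential.} By Girsanov, the likelihood ratio of the observation record under the two unravelings is a ratio of two drifts of $x$ sharing the diffusion $C$. Under $\eta=0$ the drift is $f^c$. Under $\eta=1$, after projecting the latent $\psi$ onto the observation filtration, the drift is $f^c+\Gamma\bar c+\mathrm{h.c.}$ with $\bar c$ the filtered mean. This gives
\[
d\Lambda_t=\Lambda_t\,(\Gamma_{ia}\bar c_a+\mathrm{h.c.})C^{-1}_{ij}(dx_j-f^c_jdt).
\]
This step is the one I expect to need most care. One route is the innovations/Kallianpur--Striebel argument. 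The other is to use the change of measure $Z_t$ of Eq.~\ref{path_change_measure} and take its conditional expectation under $\mathbb P^{(0)}$ given $x_{0:t}$. I would try the second: $\Lambda_t=\mathbb E^{(0)}[Z_t\mid x_{0:t}]=\mathbb E^{(0)}[\Tr P_{\psi_t}\mid x_{0:t}]=\Tr\sigma_t$. The last equality uses that $\sigma_t$ is the second moment of the $\eta=0$ filtered law, which already gives Eq.~\ref{eq:TRsigma} directly. The remaining subtlety is that the conditional-expectation identity for Radon--Nikodym derivatives restricted to the observation sub-filtration holds. Here the correlated noise $\Gamma$ matters only in that $Z_t$ must be adapted to the joint filtration, which it is.

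\textbf{Step 3: the normalized ratio satisfies the $\eta=1$ filter.} Set $\tilde\rho_t=\sigma_t/N_t$ and apply It\^o's quotient rule:
\[
d\tilde\rho=\frac{d\sigma}{N}-\frac{\sigma\,dN}{N^2}+\frac{\sigma\,(dN)^2}{N^3}-\frac{d\sigma\,dN}{N^2}.
\]
Use $dx\,dx^T=C\,dt$ and the identity $\Gamma\Tr(L\tilde\rho)+\mathrm{h.c.}=\Gamma\bar c+\mathrm{h.c.}$ with $\bar c_a=\Tr(L_a\tilde\rho)$. Collecting terms, the drift corrections reorganise $dx-f^cdt$ into the innovation $dI$ and $L_a$ into $L_a-\bar c_a$. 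This reproduces exactly the $\eta=1$ form of Eq.~\ref{eq:rho-filter} for $\tilde\rho$. Since $\tilde\rho_0=\rho_0$, pathwise uniqueness of solutions of that SDE gives $\tilde\rho_t=\rho_t$, which is Eq.~\ref{eq:rho-sigma-section}.

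Finally, the remark about path-density conventions follows because only the ratio $\Lambda_t$, a Radon--Nikodym derivative, enters. Any common reference measure on paths cancels.
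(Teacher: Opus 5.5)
Your proposal is correct. Steps 1 and 3 follow the paper's argument: the paper also takes the trace of the $\eta=0$ filter to get $dZ_t=Z_t m_{i,t}C^{-1}_{ij}dY_j$. It then applies the It\^o quotient rule to show that $\rho_t=\sigma_t/Z_t$ satisfies $d\rho_t=\cL_L(\rho_t)dt+[\mathcal K_{i,t}(\rho_t)-m_{i,t}\rho_t]C^{-1}_{ij}(dY_j-m_{j,t}dt)$. Your appeal to pathwise uniqueness makes explicit an identification the paper leaves tacit.

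The real difference is in how you prove Eq.~\ref{eq:TRsigma}.

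\textbf{The paper's route.} It writes the one-step Gaussian likelihood ratio of $dx_t$: the drift is $f^c$ under $\eta=0$ and the filtered drift $f^c+m_t$ under $\eta=1$. Expanding to It\^o order gives $d\Lambda_t=\Lambda_t m_{i,t}C^{-1}_{ij}dY_j$. It then concludes $\Lambda_t=Z_t$ because both solve the same linear SDE from the same initial value. This requires the normalization step first, to know that the $m_t$ built from $\sigma_t/Z_t$ is the $\mathbb P^{(1)}$-filtered drift. In return it uses nothing beyond the Gaussian structure of the observation increments, and it produces the SDE for $\Lambda_t$ that is reused for $\ell_t=\Lambda_T/\Lambda_t$.

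\textbf{Your route (b).} You write $\Lambda_t=\mathbb E^{(0)}[Z_t\mid x_{0:t}]=\Tr\sigma_t$. This is shorter and does not depend on Step 3. However, it puts all the weight on Eq.~\ref{path_change_measure}, which Section~\ref{section:linear_nonlinear} only asserts. That is also where $\Gamma$ actually enters, not in adaptedness. The cross-variation $dW_i\,dZ_t/Z_t=(\Gamma_{ia}\hat\psi_t^\dagger L_a\hat\psi_t+\mathrm{h.c.})dt$ is what turns the tilted drift of $x$ into the $\eta=1$ drift. The paper itself uses your identity later, in Appendix~\ref{sec:retrograde-filter}, where it writes $\Lambda_t=\int\|\psi\|^2\cP^{(0)}(d\psi\mid x_{0:t})$.

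The same abstract Bayes formula, combined with $Z_tP_{\hat\psi_t}=P_{\psi_t}$, gives $\mathbb E^{(1)}[P_{\phi_t}\mid x_{0:t}]=\sigma_t/\Tr\sigma_t$ in one line. So on your route, Step 3 becomes a consistency check rather than a necessary ingredient.
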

\begin{proof} See Appendix~\ref{appendixlemma3}.\end{proof}

\subsection{The retrograde filter}
\label{retrograde}
There exists a quantum analog of the backward $\beta$ messages in Eq.~\ref{alphabeta}, known as the retrograde filter or effect operator $E_t$ \cite{tsang_time-symmetric_2009}. The retrograde filter equation can be derived for the linear unraveling where we can use the Hilbert-Schmidt adjoint for linear operators. Define the linear operator that describes the filter dynamics of the linear unraveling $\sigma_{t+dt} =\cM_{t,dx_t} (\sigma_t)$. The effect operator satisfies the adjoint dynamics 
\bea
E_t(x) =\cM^\dagger_{t,dx_t}(E_{t+dt}(x'))\qquad E_T=I\label{eq:effect-recursion-section}
\eea
with $x'=x+dx_t$ and the adjoint operator is defined through the relation $\Tr\left(\cM^\dagger_{t,dx_t}(A)B\right)=\Tr\left(A\cM_{t,dx_t}(B)\right)$ for all operators $A,B$. The explicit form of Eq.~\ref{eq:effect-recursion-section} and the adjoint operators is given by Eq.~\ref{explicit}. 

We can relate the effect operator to the Bayesian backward message as follows. 
Define the backward message of the non-linear process ($\eta=1$) in the usual Bayesian sense:
\beaa
b_t(\phi_t,x_t) = \cP^{(1)}(x_{t:T}\mid \phi_t,x_t)\qquad b_T(\phi,x_T)=1
\eeaa
where $\phi$ denotes a normalized wave function. In Appendix~\ref{sec:retrograde-filter} we show that
the effect operator is related to the  backward message as
\bea
b_t(\phi,x)=q_t \Tr\left(E_t(x)P_{\phi}\right) \qquad q_t=\cP^{(0)}(x_{t:T}|x_{0:t})\label{bE}
\eea
Eq.~\ref{bE} thus establishes the relation between the classical notion of the Bayesian backward message
$b_t$ and the quantum notion of the retrograde effect operator $E_t$. The retrograde equation
Eq.~\ref{eq:effect-recursion-section} coincides with the result of \cite{gammelmark_past_2013} who consider the linear unraveling only. 

Because the dynamics of $\sigma_t$ and $E_t$ are related by the adjoint transformation, it is easy to see that 
$\Tr(E_t \sigma_t)$ is independent of $t$. 
Defining (see Eq.~\ref{eq:TRsigma}) 
\beaa
\Lambda_t = \frac{\cP^{(1)}(x_{0:t})}{\cP^{(0)}(x_{0:t})} \qquad \ell_t = \frac{\cP^{(1)}(x_{t:T}\mid x_{0:t})}{\cP^{(0)}(x_{t:T}\mid x_{0:t})}
\eeaa
we obtain (see Appendix~\ref{sec:retrograde-filter})
\beaa
\Tr(E_t \sigma_t) =\Lambda_t \ell_t = \Lambda_T\qquad \Tr(E_t \rho_t) = \frac{\Tr(E_t\sigma_t)}{\Lambda_t} = \ell_t
\eeaa
with  $\Lambda_0=\ell_T=1$.
Thus $(\sigma_t,E_t)$ provides the linear-adjoint operator representation of the classical forward-backward construction. Their contraction gives the complete record likelihood, while $\Tr(\sigma_t) =\Lambda_t$ and $\Tr(E_t\rho_t)=\ell_t$ give the past and future  factors.

\subsection{Quantum smoothing}
\label{zhang_estimating_2020}

While the classical forward and backward messages in Eq.~\ref{alphabeta} have their analogues in the quantum case as the filtered density matrix $\rho_t^F$ and the effect operator $E_t(x)$, respectively, the 
smoothed density matrix is not simply the product of filtered and retrograde operators, as Eq.~\ref{alphabeta} would suggest. 
Instead, both filtered and retrograde operators $(\rho_t,E_t)$ constitute the so-called past quantum state as is further explained in section~\ref{section:gammelmark}. 
Instead, as we show in section~\ref{smoothing},  the Bayesian formulation immediately implies a smoothed distribution $\cP(\psi_t\mid x_{0:T})$ from which we can define a smoothed density matrix.

\subsubsection{The past quantum state}
\label{section:gammelmark}
As was observed by \cite{gammelmark_past_2013}, one cannot use the smoothed density matrix $\rho^S_t$ Eq.~\ref{smoothed} to predict measurement outcomes at time $t$ as can be done in the classical case, using an expression such as $p_t(m) =\Tr \left(\Omega_m \rho_t^S \Omega_m^\dagger\right)$ with $\sum_m\hat\Omega_m^\dagger \hat\Omega_m=\hat I$. The reason is that the additional measurement $\hat\Omega_m$ affects the probability of the future record through $E_t$. 
Instead,
\cite{gammelmark_past_2013} show  that  the probability of outcome $m$, conditioned on the entire observation record $x_{0:T}$ is given by 
\bea
p_t(m) =\frac{\Tr\left(\hat \Omega_m \rho_t^F \hat\Omega_m^\dagger E_t\right)}{\sum_m \Tr\left(\hat \Omega_m \rho_t^F \hat\Omega_m^\dagger E_t\right)}\label{gammelmark}
\eea
They call the pair $(\rho_t^F,E_t)$ the past quantum state. This formula can be derived using our Bayesian formulation as shown in Appendix~\ref{appendix_gammelmark}. 
The essential point is that the intermediate measurement changes the quantum
state before the likelihood of the future record is evaluated. Thus, the
Bayesian weight associated with the outcome \(m\) is obtained by first applying
the measurement map to the filtered state \(\rho_t^F\), and then contracting
the resulting state with the backward effect operator \(E_t\).

This observation is not in contradiction with our definition of the smoothed density matrix $\rho_t^S$ Eq.~\ref{rhoS} as long as it is not used to predict measurement outcomes at time $t$. In fact the argument by  \cite{gammelmark_past_2013} shows that that would give an incorrect result. Instead, $\rho_t^S$ is simply a more accurate estimate of the latent quantum state, compared to the filtered estimate $\rho_t^F$.

\subsubsection{Quantum state smoothing by \cite{guevara_quantum_2015}}
\label{section:quantumstatesmoothing}
The approach to quantum state smoothing, introduced by \cite{guevara_quantum_2015,guevara_completely_2020}
and subsequently developed in
Refs.~\cite{laverick_quantum_2019,guevara_completely_2020, laverick_quantum_2023}, considers open quantum
systems for which only part of the environmental measurement record is
available to the observer.  The environment is conceptually divided into two
measurement channels.  One channel produces the observed measurement record
$x_{0:T}$, while the second channel generates an unobserved record
$u_{0:T}$. They assume the saturated noise case, so that  if both records were available, the quantum state would evolve
according to a pure-state stochastic trajectory. Since the hidden record is unavailable experimentally, it is treated as a
latent random variable.

Since the filtered state is fully determined and pure $\rho_t^F=P_{\phi_{x_{0:t},u_{0:t}}}$, the  smoothed distribution conditioned on $x_{0:T},u_{0:T}$ equals the filtered distribution
$\rho^S_t=\rho^F_t$.
In this case no quantum state fluctuations remain and the Bayesian inference is referred to the classical variables only. The result is that the smoothed estimate is given by the filtered estimate, averaged over the smoothed posterior of $p(u_{0:t}|x_{0:T})$
%
%a simplification occurs because the smoothed state conditioned on $x_{0:T},u_{0:T}$ is equal to the filtered state $P_{\psi_{x_{0:T},u_{0:T}}}=P_{\psi_{x_{0:t},u_{0:t}}}$ and the smoothed posterior conditioned on $x_{0:T}$ is given by
%\footnote{Consider observations $x_{0:T}$ that fully determine the pure filtered state $\rho_t^F=P_{\psi_{x_{0:t}}}$. Then the smoothed state
%\beaa
%\rho_S(t) &=&\frac{1}{p(x_{0:T}) } \int d\psi_{t} du_{0:T}P_{\psi_t} \cP(\psi_t, x_{0:T},u_{0:T}) \\
%&=&\frac{1}{p(x_{0:T}) }\int d\psi_{t} du_{0:t}P_{\psi_t} \cP(\psi_{t},x_{0:t},u_{0:t})\cP(x_{t:T}\mid \psi_t,x_t,u_t)\\
%&=&\frac{1}{p(x_{0:T}) }\int du_{0:t} P_{\psi_{x_{0:t},u_{0:t}}} p(x_{0:t},u_{0:t}) p(x_{t:T}\mid x_t,u_t) =\int du_{0:t} P_{\psi_{x_{0:t},u_{0:t}}} p(u_{0:t}\mid x_{0:T})
%\eeaa
%}
\bea
\rho^S_t=\int du_{0:t} P_{\psi_{x_{0:t},u_{0:t}}} p(u_{0:t}\mid x_{0:T})
\eea

%\footnote{
%Consider the probability of the measurement trajectory
%\beaa
%p(x_{0:T})&=&\int d\psi_{0:T} \cP(x_{0:T},\psi_{0:T}) = \int d\psi_{0:T} \cP(x_{0:t},\psi_{0:t})\cP(x_{t+dt:T},\psi_{t+dt:T}|x_t,\psi_t)
%=\int d\psi_t \cP(x_{0:t},\psi_t) \cP(x_{t+dt:T}|x_t,\psi_t)
%\eeaa
%In the saturated noise case with pure initial state, the trajectory $x_{0:t}$ determines uniquely the quantum state $\psi_{x_{0:t}}$, so that the measurement process $x$ becomes Markovian:
%\beaa
% p(x_{0:T})&=&p(x_{0:t})p(x_{t+dt:T}|x_t)
%\eeaa
%with $p(x_{0:t})=\cP(x_{0:t},\psi_{x_{0:t}})$ and $p(x_{t+dt:T}|x_t)= \cP(x_{t+dt:T}|x_t,\psi_{x_{0:t}})$.
%\beaa
%\rho_S(t) &=& \int d\psi_{0:T} \psi_t \psi_t^\dagger \cP(\psi_{0:T}|x_{0:T}) =\frac{1}{p(x_{0:T}) } \int d\psi_{0:T} \psi_t \psi_t^\dagger \cP(\psi_{0:T}, x_{0:T}) \\
%&=&\frac{1}{p(x_{0:T}) }\int d\psi_{0:T} \psi_t \psi_t^\dagger \cP(\psi_{0:t},x_{0:t})\cP(\psi_{t+dt:T},x_{t+dt:T}|\psi_t,x_t)= \frac{1}{p(x_{0:T}) }\int d\psi_t\psi_t \psi_t^\dagger \cP(\psi_{t},x_{0:t})\cP(x_{t+dt:T}|\psi_t,x_t)\\
%&=&\frac{1}{p(x_{0:T}) }\psi_{x_{0:t}}\psi_{x_{0:t}}^\dagger p(x_{0:t}) p(x_{t+dt:T}|x_t) =\psi_{x_{0:t}}\psi_{x_{0:t}}^\dagger
%\eeaa
%}
Since the latent state conditioned on $x_{0:t},u_{0:t}$ is pure, the joint probability of the classical variables is Markovian: $p(x_{0:T},u_{0:T}) = p(x_{0:t},u_{0:t}) p(x_{t:T},u_{t:T}\mid x_t,u_t)$ and the posterior distribution over
the hidden measurement record $p(u_{0:t}\mid x_{0:T})$ can be estimated by standard particle filtering. 
They show that the smoothed estimate is more accurate than the filtered estimate $\rho_t^F=\int du_{0:t} p(u_{0:t}\mid x_{0:t}) P_{x_{0:t},u_{0:t}}=\int d\psi \cP(\psi\mid x_{0:t}) P_\psi$.

Our definition of smoothed quantum state generalizes the approach of \cite{guevara_quantum_2015}  to the non saturated noise case and does not require the introduction of a latent unobserved measurement record. 
We will show in section~\ref{section:filtering} how in the general case the filtered and smoothed estimates can be obtained by particle filtering and thus obtain this case as a special case.

Using Eq.~\ref{bE} we get 
\bea
\rho^S_t= \int d\phi_t P_{\phi_t} \cP^{(1)}(\phi_t\mid x_{0:T} )=\frac{\int d\phi_t P_{\phi_t}\Tr(P_{\phi_t} E_t)\cP^{(1)}(\phi_t|x_{0:t})}{\ell_t}\label{smoothed}
\eea
This is the estimate of the quantum density matrix at time $t$ conditioned on the entire observation record $x_{0:T}$.

The Bayesian treatment implies that one can use standard Monte Carlo methods, known as particle filtering and particle smoothing. This is treated in section~\ref{section:filtering}. 
 
\section{Particle filtering and smoothing}
\label{section:filtering}

Since the hybrid dynamics is represented by an ordinary stochastic
process, the filtered distribution of the latent wave function can be
approximated by a weighted ensemble of stochastic trajectories. We use a
time discretization $dt$ and define $
t_k = k dt$ 
and denote the observed classical increments by $
d x_k = x_{k+1}-x_k$ and define $d\psi_k =\psi_{k+1}-\psi_k$. 
For particle \(r=1,\ldots,N\), let \(\psi_k^{(r)}\) denote the quantum
state and \(w_k^{(r)}\) its normalized importance weight.

We write the unravelings  Eqs.~\eqref{sse3} and~\eqref{dx} in the form
\beaa
d \psi_k
&=&
F(\psi_k,x_k,t_k)d t
+
G_a(\psi_k,x_k,t_k)\,d\xi_{a,k},
\label{eq:pf_general_psi}
\\
dx_{i,k}
&=&
h_i(\psi_k,x_k,t_k)\,dt
+
dW_{i,k}
\eeaa
where
\beaa
F(\psi,x,t)
&=&
\left[
-iH
-\frac{1}{2}Q_{ab}
\left(
L_b^\dagger L_a
-2c_b^*L_a
+c_ac_b^*
\right)
\right]\psi,
\\
G_a(\psi,x,t)
&=&
(L_a-c_a)\psi,
\\
h_i(\psi,x,t)
&=&
f_i^c
+
\Gamma_{ia}c_a
+
\Gamma_{ia}^*c_a^*.
\eeaa
and the noise increments are real and satisfy Eqs.~\eqref{fp2lindblad1b}.

The filtered estimate $\cP_k(\psi\mid x_{0:k})$ satisfies the standard filtering update equation
\beaa
\cP_{k+1}(\psi'\mid x_{0:k+1}) &= & \int d\psi \cP_k(\psi\mid x_{0:k}) \cP(d\psi,dx_k\mid \psi,x_k)
\eeaa
with $\psi'=\psi+d\psi$. 
The particle filter approximates $\cP_k(\psi \mid x_{0:k})$ by an estimate $\hat\cP_k(\psi \mid x_{0:k})$ using $N$ particles $\psi_k^{(r)}$ with associated weights $w_k^{(r)}$ such that $\sum_r w_k^{(r)}=1$:
\beaa
\hat\cP_k(\psi \mid x_{0:k})=\sum_{r=1}^N w_k^{(r)}\delta\left(\psi-\psi_k^{(r)}\right)
\eeaa
The filtering update becomes
\beaa
\hat\cP_{k+1}(\psi' \mid x_{0:k+1})
=\sum_r w_r^{(r)} \cP(d\psi,dx_k\mid \psi_k^{(r)},x_k)
\eeaa
The filtering recursion consists of two components: 1) an update of each particle $\psi_{k+1}^{(r)}=\psi_{k}^{(r)}+d\psi_k^{(r)}$ by sampling an increment  $d\psi_{k}^{(r)}$ from $\cP(d\psi\mid dx_k,\psi_k^{(r)},x_k)$ and 2) an update of its weight with the likelihood of the observed increment:  $w^{(r)}_{k+1} \propto w^{(r)}_k p(dx_{k}\mid \psi_k^{(r)},x_k)$. We discuss these steps in detail.

\paragraph{Updating the particle}
For small $dt$, $p(d\psi_{k},dx_{k}\mid \psi_k,x_k)$ 
%and $p(d\psi_{k}|dx_{k},\psi_k,x_k)$ are 
is Gaussian distributed which allows us to sample $d\psi_k$ conditioned on $dx_k$. Equivalently, we write $d\psi_k = F_kdt +G_{a,k} d\xi_{a,k}$ with  $d\xi_{a,k}, dx_k$ jointly Gaussian with  mean $(0, h_{k} dt)$ and covariance matrix $D=\left(\begin{matrix} Q & \Gamma \\ \Gamma^\dagger & C\end{matrix}\right)$. Conditioned on $dx_k$, $d\xi_k$ is Gaussian distributed as (see footnote 
on page \pageref{gaussian_conditional})
\bea
d\xi_{k} \sim \cN(d\xi_k| \mu_k, S) \qquad \mu_k = \Gamma^\dagger C^{-1} (dx_k -h(\psi_k,x_k,t_k)dt)\qquad S = Q-\Gamma^\dagger C^{-1} \Gamma\label{dxi1}
\eea
Positivity of $D$ implies
$
S\geq 0
$.
It can be sampled according to
\begin{equation}
d\xi_k^{(r)}
=
\Gamma^\dagger C^{-1}dI^{(r)}_k
+S^{1/2}\sqrt{\Delta t}\,\epsilon_k^{(r)},\qquad\epsilon_k^{(r)}\sim\mathcal{N}(0,I)\qquad dI_k^{(r)}=dx_k -h(\psi_k^{(r)},x_k,t_k)dt
\label{eq:pf_quantum_noise_sample}
\end{equation}
and the propagation equation for particle \(r\) is
\begin{equation}
\psi_{k+1}^{(r)}
=
\psi_k^{(r)}
+
F_k^{(r)}d t
+
\sum_a G_{a,k}^{(r)}
d\xi_{a,k}^{(r)},
\label{eq:pf_particle_propagation}
\end{equation}
where $
F_k^{(r)}
=
F(\psi_k^{(r)},x_k,t_k),
G_{a,k}^{(r)}
=
G_a(\psi_k^{(r)},x_k,t_k)$.

\paragraph{Updating the weights} 
Using the conditional quantum transition density as the proposal
distribution, the unnormalized particle weights satisfy
\beaa
\widetilde{w}_{k+1}^{(r)}
=
w_k^{(r)}
p\left(
d x_k
\mid
\psi_k^{(r)},x_k
\right).
\eeaa
Since $p(dx_k | \psi_k,x_k)$ is Gaussian with mean $h_k dt$ and covariance $Cdt$ we obtain
\begin{align}
\log\widetilde{w}_{k+1}^{(r)}
=
\log w_k^{(r)}
-\frac{1}{2d t}
\left(d I_k^{(r)}\right)^T
C^{-1}
dI_k^{(r)}.
\label{eq:pf_log_weight}
\end{align}
where we ignore the $\log \det C$ term since it is independent of the sample and drops out after normalization and $d I_k^{(r)}$ is given by Eq.~\ref{eq:pf_quantum_noise_sample}.
The normalized weights are
\begin{equation}
w_{k+1}^{(r)}
=
\frac{
\widetilde{w}_{k+1}^{(r)}
}{
\sum_{s=1}^{N}
\widetilde{w}_{k+1}^{(s)}
}.
\label{eq:pf_normalized_weights}
\end{equation}
The weights are initialized as $w_{k=0}^{(r)}=1/N$. 
\paragraph{Filtered density matrix}
The particle filtering algorithm can be applied to either the normalized or unnormalized unravelings. For the normalized unraveling, the filtered estimate of the density matrix is
\begin{equation}
\hat\rho_{k}
=
\sum_{r=1}^{N}
w_k^{(r)}
P_k^{(r)},
\qquad
P_k^{(r)}
=
\phi_k^{(r)}
\phi_k^{(r)\dagger},\qquad \operatorname{Tr}P_k^{(r)}=1
\label{eq:pf_filtered_density_nonlinear}
\end{equation}
Alternatively, because of Eq.~\ref{eq:rho-sigma-section}, we can also estimate $\rho_k$ by first estimating $\sigma_t$ for the unnormalized linear unraveling
\begin{equation}
\hat\rho_k = \frac{\hat\sigma_k}{\Tr \hat\sigma_k}\qquad \hat\sigma_{k}
=
\sum_{r=1}^{N}
v_k^{(r)}
P_k^{(r)},
\qquad
P_k^{(r)}
=
\psi_k^{(r)}
\psi_k^{(r)\dagger},
\label{eq:pf_filtered_density_linear}
\end{equation}
with $v_k^{(r)}$ the filtering weights and $\Tr P_k^{(r)}\ne 1$. These estimates are different because their samples come from different distributions and have different weights. 

\paragraph{Efficiency and resampling}
Since the particle weights are updated by multiplication, one should expect that some particles will exponentially dominate over all other particles for large $k$. 
The efficiency of the particle filtering method is monitored by means of the effective sample
size
\begin{equation}
N_{\mathrm{eff}}
=\left(
\sum_{r=1}^{N}
\left(w_k^{(r)}\right)^2\right)^{-1}
\label{eq:pf_effective_sample_size}
\end{equation}
When \(N_{\mathrm{eff}}\) falls below a prescribed threshold, the
particles are resampled according to their normalized weights and the
new weights are set to $
w_k^{(r)}=1/N$. 

%The complete particle-filtering recursion is therefore as follows:
%\begin{enumerate}
%    \item For every particle, compute
%    \[
%    c_{a,k}^{(r)},
%    \qquad
%    h_k^{(r)},
%    \qquad
%    dI_k^{(r)}
%    =
%    d x_k-h_k^{(r)}d t.
%    \]
%
%    \item Update the unnormalized importance weight according to
%    \[
%    \widetilde{w}_{k+1}^{(r)}
%    =
%    w_k^{(r)}
%    \mathcal{N}\left(
%    d x_k;
%    h_k^{(r)}d t,
%    C d t
%    \right).
%    \]
%
%    \item Draw the residual quantum noise and set
%    \[
%    d\xi_k^{(r)}
%    =
%    \Gamma^\dagger C^{-1}dI_k^{(r)}
%    +
%    S^{1/2}\sqrt{d t}\,
%    \epsilon_k^{(r)}.
%    \]
%
%    \item Propagate \(\psi_k^{(r)}\) using
%    Eq.~\eqref{eq:pf_particle_propagation}.
%
%    \item Normalize the importance weights and resample when
%    \(N_{\mathrm{eff}}\) is sufficiently small.
%\end{enumerate}

\paragraph{Saturated noise case}
In the saturated-noise case 
$S=
Q-
\Gamma^\dagger C^{-1}\Gamma=0
$
(see section~\ref{section:noise_covariance}).
Eq.~\ref{dxi1} then shows that 
the quantum-noise increment is  completely determined by the
observed increment:
\beaa
d\xi_k^{(r)}
=
\Gamma^\dagger C^{-1}dI_k^{(r)}
\eeaa
There is no residual quantum noise to sample. Consequently, for a known
initial pure state, the observed trajectory determines the quantum
trajectory recursively. In this limit a single particle is sufficient,
apart from uncertainty in the initial state or in unknown model
parameters.

\paragraph{Particle smoothing}
For particle smoothing, the complete particle genealogy is retained
during the forward filtering pass. In the simplest case one runs the particle filter until the final time $T$ without resampling. Then the smoothed estimates at any intermediate time are given by 
\begin{equation}
\hat\rho^S_k
=
\sum_{r=1}^{N}
w_{T}^{(r)}
P_k^{(r)}\label{eq:pf_smoothed_density}
\end{equation}
and similar for $\hat\sigma^S_k$. 
When using resampling, more advanced methods can be employed, such as  the Forward Filter Backward Simulation (FFBSi) method \cite{godsill2004monte,douc2011sequential}. 
See also this tutorial \cite{doucet2009tutorial}.

\section{Numerical examples}
\label{numerical}
In this section, we present two numerical examples that illustrate 
distinctive features of the Bayesian formulation. 
\bi
\item 
The Bayesian formulation performs smoothing directly on the probability distribution over latent quantum states and trajectories. This differs from the past-quantum-state formalism, in which forward and backward operators are combined to obtain retrodictive probabilities for specified measurements but do not, in general, define a universal smoothed density matrix. The first example demonstrates that, for a hidden classical process coupled to a quantum system, conditioning on the complete observation record improves both reconstruction of the classical process and the associated smoothed quantum state compared with causal filtering.
\item 
The posterior distribution over latent quantum states contains inferential information that is not retained by its density-matrix second moment. This does not contradict the operational completeness of the density matrix for predicting measurements on the quantum system: the additional information concerns the latent trajectory in the specified unraveling. The second example demonstrates this explicitly: the posterior over latent pure-state trajectories can be strongly multimodal even when its density-matrix average is close to maximally mixed.
The distribution over quantum trajectories was previously studied by \cite{weber_mapping_2014} for transmon qubits. 
%Uses method called quantum Bayes rule which is interleaving unitary evolution with measurements from Korotkov 2011,1999 \url{http://arxiv.org/abs/1111.4016} does not use KS. Observes that the ensemble prediction is mixed, while individual trajectories stay pure.
%Constructs most probable quantum trajectories between initial and final state for transmon qubit. 

\ei

% Suggested subsection to insert after Section 6.1.
% Figure filenames below are placeholders; change them to the names used when
% exporting the MATLAB figures.

\subsection{Filtering and smoothing a hidden quantum-classical process}
\label{sec:telegraph-example}
We illustrate an inference problem in which a coupled quantum and classical process $\psi_t,s_t$ is 
 hidden and is observed only indirectly through continuous measurements $x_t$. 
This provides a simple
example in which particle smoothing has a clear advantage over
causal filtering. 

We consider a classical random telegraph process
$s_t\in\{-1,+1\}$
with symmetric switching rate $\lambda$ so that $\Pr(s_{t+dt}=-s_t\mid s_t)=\lambda\,dt+o(dt)$. 
The telegraph variable modulates the Hamiltonian of a single qubit according to
\begin{equation}
    H_t
    =
    \frac{\omega(s_t)}{2}\sigma_y,
    \qquad
    \omega(s_t)=\omega_0+\Delta\omega\,s_t .
    \label{eq:telegraph-H}
\end{equation}
Thus the two classical states correspond to two different Rabi frequencies
$\omega_\pm=\omega_0\pm\Delta\omega$.

The qubit is continuously monitored in the $\sigma_z$ basis with total
measurement strength $\kappa$ so that $L=\sqrt{\kappa} \sigma_z$ and detector efficiency $\eta_{\rm eff}$.  Eqs.~\ref{sse3} and~\ref{dx} become
\beaa
d\psi_t
&=&-iH_t\psi_t\,dt
-\frac{\kappa}{2}\left(\sigma_z-\langle\sigma_z\rangle_t\right)^2\psi_t\,dt+
\sqrt{\kappa}\left(\sigma_z-\langle\sigma_z\rangle_t\right)\psi_t\,d\xi_t
\label{eq:telegraph-sse}\\
dx_t&=&2\Gamma \sqrt{\kappa}\,\langle\sigma_z\rangle_t\,dt+dW_t \label{eq:telegraph-record}
\eeaa
with $Q=1$ and  $\Gamma=\sqrt{\eta_\text{eff}}$.

For $\eta_{\rm eff}=1$  the quantum measurement backaction is completely determined by the
observed record, but the classical telegraph trajectory $s_{0:T}$ remains
hidden.  For $\eta_{\rm eff}<1$, both the telegraph trajectory and the
unobserved measurement backaction must be inferred.
The Bayesian state of the inference problem is therefore the joint posterior
over the hybrid trajectory. For filtering and smoothing the posterior is conditioned on the past measurements or all measurements, respectively:
\begin{equation}
    \cP(\psi_{0:t},s_{0:t}\mid x_{0:t})
\qquad \text{or}\qquad    \cP(\psi_{0:T},s_{0:T}\mid x_{0:T})
\end{equation}
  We approximate these distributions by particles $(\psi_t^{(r)},s_t^{(r)})$ with corresponding weights $w_t^{(r)}$ as described in section~\ref{section:filtering}. Each
particle carries a candidate telegraph history and the corresponding quantum
trajectory. 
The filtered probability that the telegraph variable is in the $+1$ state is
then
\begin{equation}
    p_t^F
    :=
    \cP(s_t=+1\mid Y_{0:t})
    \simeq
    \sum_{r=1}^N
    w_t^{(r)}
    {\bf 1}\!\left[s_t^{(r)}=+1\right].
    \label{eq:telegraph-filter}
\end{equation}
For the simple no-resampling path-space smoother used here, the same stored
particle histories are reweighted by their final weights,
\begin{equation}
    p_t^S
    :=
    \cP(s_t=+1\mid Y_{0:T})
    \simeq
    \sum_{r=1}^N
    w_T^{(r)}
    {\bf 1}\!\left[s_t^{(r)}=+1\right].
    \label{eq:telegraph-smoother}
\end{equation}
The smoothed posterior probability that a transition occurred between
$t_k$ and $t_{k+1}$ is obtained similarly,
\begin{equation}
    q_k^S
    =
    \sum_{r=1}^N
    w_T^{(r)}
    {\bf 1}\!\left[s_{k+1}^{(r)}\neq s_k^{(r)}\right].
    \label{eq:telegraph-switchprob}
\end{equation}

Figure~\ref{fig:telegraph-filter-smoother} shows a representative realization.
The qubit is initialized in the state $|+x\rangle$ and the telegraph process
in $s_0=+1$.   In the realization shown,
the true classical trajectory switches twice, at approximately
$t=6.04$ and $t=10.49$.
The top panel  compares the true
telegraph trajectory with the filtered and smoothed posterior means
$\mathbb{E}[s_t\mid Y_{0:t}]$ and
$\mathbb{E}[s_t\mid Y_{0:T}]$.  The causal filter recognizes both
transitions only after a substantial delay, because evidence for a change in
the Hamiltonian can only accumulate through the subsequent quantum
measurement record. In contrast, smoothing uses the later observations to
reassign probability retrospectively to the time at which the change most
likely occurred.
For the first transition, from $\omega=6$ to $\omega=2$, the true switch
occurs at $t\simeq6.04$. The filtered probability crosses the decision
threshold $p_t^F=1/2$ only at $t\simeq7.83$, while the smoothed probability
crosses at $t\simeq6.20$. For the second transition, from $\omega=2$ to
$\omega=6$, the corresponding times are $10.49$, $11.45$, and $10.64$,
respectively. The first transition is intrinsically harder to detect in this
realization. At the transition the qubit happens to be close to a
$\sigma_z$ eigenstate, with $|\langle\sigma_z\rangle|\simeq0.99$ and a small
$x$ component (not shown). Since the Hamiltonian rotates about the $y$ axis,
the instantaneous sensitivity of the measured coordinate obeys
$\dot z\propto-\omega x$ and is therefore small at that moment. In addition,
a transition from a fast to a slow Rabi frequency is detected mainly through
the subsequent absence of the faster motion. The second transition turns on
the faster dynamics and produces a more immediate signature in the observed
record. Smoothing largely removes this asymmetry because it can use the
complete future record when assigning the transition time.
For this particular realization, using the decision rule
$\hat s_t=+1$ when $p_t>1/2$ and excluding the first and last $5\%$ of the
record, the classification error decreases from
$0.281$ for filtering to $0.015$ for smoothing, while the mean posterior probability
assigned to the true telegraph state increases from $0.648$ to $0.836$.

The second panel of Fig.~\ref{fig:telegraph-filter-smoother} shows the fidelity $\Tr\left(\psi_t\psi_t^\dagger \rho_t^{F,S}\right)$ with $\psi_t$ the true quantum state and $\rho_t^{F}$ and $\rho_t^S$ its filtered and smoothed estimate, respectively.
 The quantum state itself is reconstructed accurately, but this does
not make the telegraph process directly observable: the classical state must
be inferred indirectly from how it changes the subsequent quantum dynamics.
This distinction is precisely what makes the example a nontrivial
quantum--classical smoothing problem. For this particular realization, the average fidelity increases from 
$0.8688$ for filtering to $0.9323$ for smoothing. 

\begin{figure}
\bc
\includegraphics[width=0.6\textwidth]{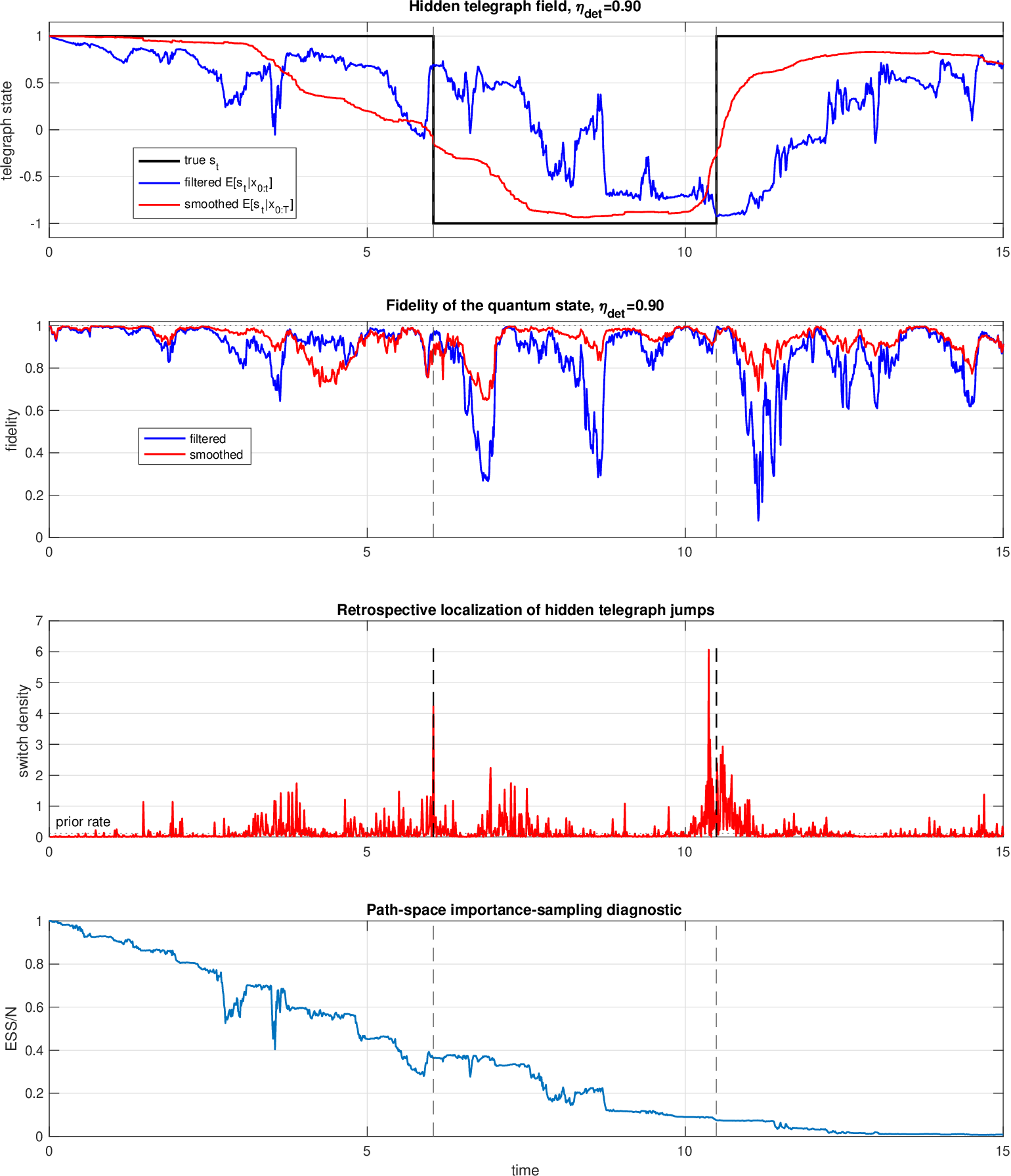}
\ec
\caption{Filtering and smoothing of a hidden classical telegraph process and quantum state
through a continuously monitored qubit. Subplots are numbered 1-4. 1: true telegraph state $s_t$
(black), filtered posterior mean $\mathbb{E}[s_t\mid Y_{0:t}]$ (blue), and
smoothed posterior mean $\mathbb{E}[s_t\mid Y_{0:T}]$ (red). 
The filter detects the switches only after
future measurements have accumulated sufficient evidence, whereas the
smoother uses those measurements retrospectively to localize the transitions
close to their actual times.
2:
  filtered, and smoothed fidelities $\av{\psi_t | \rho_t^{F,S}\psi_t}$. 
3: smoothed posterior switch density $q_k^S/dt$; dashed vertical lines mark
the true transitions and the horizontal dotted line is the prior switching
rate $\lambda$. 4: effective sample size divided by the number of
particles for the no-resampling path-space importance sampler.
Parameters are $
    dt=0.01, 
    T=15,
    \lambda=0.12,
    \omega_0=4,
    \Delta\omega=2,
    \kappa=1,
    \eta_{\rm eff}=0.9$ and 
$N=\num{2e4}$ particles.  }
\label{fig:telegraph-filter-smoother}
\end{figure}
% telegraph_field_particle_smoothing_qubit_fidelity_4panel.m

The third panel  of Fig.~\ref{fig:telegraph-filter-smoother} shows $q_k^S/dt$, the smoothed posterior
switch probability per unit time. Its mass is concentrated around the two
actual transitions, and the posterior expected number of switches is
$\sum_k q_k^S=2.397$, close to the two switches in the simulated trajectory.
The fourth panel shows the effective sample size of the simple path-space
importance sampler. As expected for whole-trajectory importance sampling, the
effective sample size decreases as information accumulates over a long
record. This is an algorithmic limitation of the simplest implementation and
not of the Bayesian formulation; for longer or more informative records it
can be replaced by standard resampling and forward-filter backward-simulation
particle smoothers as discussed in section~\ref{section:filtering}.

We studied the improvement of smoothing over filtering for $\eta_\text{eff}\in [0.1, 1]$ averaging over 10 instances using the parameter setting of Fig.~\ref{fig:telegraph-filter-smoother}, which confirms that the above reported single instance results for the fidelity and posterior probability of the true telegraph state
at $\eta_\text{eff}=0.9$ are typical. Obviously, both the filtering and smoothing accuracies decrease with
decreasing $\eta_\text{eff}$ and the improvement of smoothing over filtering also decreases with decreasing $\eta_\text{eff}$.
% telegraph_field_particle_smoothing_qubit_fidelity_batch.m

This numerical example is closely related to \cite{madsen_quantum_2021}. Their example reconstructs a
fluctuating magnetic field discretized in $n$ values, and sensed by a single quantum spin. 
%They describe a time-dependent
%perturbation by a classical hidden Markov model, with classical states $n$, couple it to a continuously
%monitored quantum system, and combine classical forward--backward inference
%with the past-quantum-state formalism with density matrix $\rho_n(t)$ and effect matrix $E_n(t)$. 
They then compute the smoothed probability 
estimate of 
the classical HMM state as $P_S(n,t) \propto \Tr(\rho_n(t) E_n(t))$.  
%For a finite
%classical state space, such as the two-state telegraph process used in
%section~\ref{sec:telegraph-example}, an explicit hybrid hidden-Markov
%recursion is therefore a natural alternative to particle filtering and can exploit the finite-state structure very efficiently because it exploits the special finite-state
%structure. 
Instead, our particle filtering approach in section~\ref{sec:telegraph-example} reconstructs the smoothed distribution over both the quantum trajectories and the classical input $\cP(\psi_{0:T},s_{0:T}\mid x_{0:T})$ as well as the smoothed quantum state explicitly.

\subsection{Bimodal trajectory posteriors under partial environmental monitoring}
\label{sec:bimodal-example}

In the second example, we consider a qubit coupled to two continuously monitored environmental
channels. The first channel is accessible to the observer and produces the
measurement record $x_{0:T}$, whereas the second channel produces an
unobserved record $u_{0:T}$. This is the partially observed setting considered
in quantum state smoothing by \cite{guevara_quantum_2015} and described in section~\ref{section:quantumstatesmoothing}: if both records were known, the conditional state
would remain pure and would be described by a stochastic quantum trajectory,
while an observer with access only to $x_{0:T}$ must average over the
unobserved record.

The purpose of this experiment is to illustrate the distinction between the
posterior distribution over the latent pure-state trajectories, $
\cP\left(\psi_{0:T},u_{0:T}\mid x_{0:T}\right),
$
and the smoothed density matrix Eq.~\ref{smoothed}. 
The example is designed so that the posterior over pure states develops two
well-separated modes, although their density-matrix average is close to a
maximally mixed state. 

The qubit is initialized in the state

\begin{equation}
\ket{\psi_0}
= \ket{+x}=
\frac{\ket{+z}+\ket{-z}}{\sqrt{2}}.
\label{eq:bimodal-initial-state}
\end{equation}
The initial state is assumed to be known. 
We consider two diffusive measurement channels. The observed channel
continuously monitors $\sigma_x$ with measurement strength $\kappa_x$,
whereas the hidden channel continuously monitors $\sigma_z$ with strength
$\kappa_z$. Conditioned on both measurement records, the normalized pure
state satisfies
\begin{align}
d\psi_t
=&
-\frac{\kappa_x}{2}
\left(
\sigma_x-\langle\sigma_x\rangle_t
\right)^2
\psi_t dt
+
\sqrt{\kappa_x}
\left(
\sigma_x-\langle\sigma_x\rangle_t
\right)
\psi_t d\xi_t^{(x)}
\nonumber\\
-&
\frac{\kappa_z}{2}
\left(
\sigma_z-\langle\sigma_z\rangle_t
\right)^2
\psi_t dt
+
\sqrt{\kappa_z}
\left(
\sigma_z-\langle\sigma_z\rangle_t
\right)
\psi_t d\xi_t^{(u)}
\label{eq:bimodal-two-channel-sse}
\end{align}
where
$
\langle\sigma_j\rangle_t=
\langle\psi_t|\sigma_j|\psi_t\rangle,
j\in{x,z}$.
The Wiener increments  $d\xi_t^{(x,u)}$ are real and independent: $\av{d\xi_t^{(i)}d\xi_t^{(j)}}=\delta_{ij}dt, i,j = x,u$. 
The measurement increments are
\beaa
dx_t&=&2\sqrt{\kappa_x}
\langle\sigma_x\rangle_tdt+dW_t^{(x)},\label{eq:bimodal-observed-record}\\
du_t&=&2\sqrt{\kappa_z}
\langle\sigma_z\rangle_tdt
+
dW_t^{(u)}.
\label{eq:bimodal-hidden-record}
\eeaa
Because of the saturated noise setting: $dW_t^{(i)}=d\xi_t^{(i)}, i=x,u$. Therefore we can eliminate $dW_t^{(i)}=d\xi_t^{(i)}$ in 
Eq.~\ref{eq:bimodal-two-channel-sse} in favor of $dx_t, du_t$ and 
the quantum state trajectory is therefore determined by the pair
$(x_{0:T},u_{0:T})$. In the inference problem, however, only $x_{0:T}$ is
made available to the smoothing algorithm. The hidden record $u_{0:T}$ and
the associated pure-state trajectory $\psi_{0:T}$ are inferred jointly. Note that $\psi_t$ is real in this case. 

The hidden $\sigma_z$ measurement tends to localize individual trajectories
toward either $\ket{+z}$ or $\ket{-z}$. In contrast, the observed
$\sigma_x$ measurement does not directly distinguish the signs of the two
$\sigma_z$ branches. Therefore the reconstructed $\av{\sigma_z}\approx 0$, while individual reconstructed trajectories may strongly polarize to $\psi^\dagger \sigma_z \psi\approx \pm 1$. 
This is shown in Fig.~\ref{fig:bimodal-dynamics}. To make the bimodal structure clear, the hidden measurement is chosen to be
stronger than the observed measurement, $\kappa_z>\kappa_x$.
The left plot shows the $\sigma_{x,z}$ components of the true quantum trajectory and their smoothed reconstruction $\av{\sigma_k}_t=\Tr\left(\rho_t^S \sigma_k\right), k=x,z$.  The second and third plots show the $\sigma_{x,z}$ components of the individual posterior sample trajectories and their means. Initially, all particles are concentrated near (z=0),
corresponding to the known initial state $\ket{+x}$. As the hidden
measurement acts, the distribution broadens and subsequently separates into
two modes. At late times, the modes are concentrated near (z=+1) and
(z=-1), representing candidate trajectories that have localized toward
$\ket{+z}$ and $\ket{-z}$, respectively.
The two modes arise because both are compatible with the same observed record and express Bayesian uncertainty about which hidden trajectory occurred.
The right plot shows the so-called localization $\E((\psi_t^\dagger \sigma_z \psi_t)^2\mid x_{0:T})$, which is a fourth-order statistic that quantifies the bimodal localization and that cannot be computed from the density matrix. It shows the strong polarization of the latent quantum states towards $\pm z$.
\begin{figure}
\bc
\includegraphics[height=0.2\textwidth]{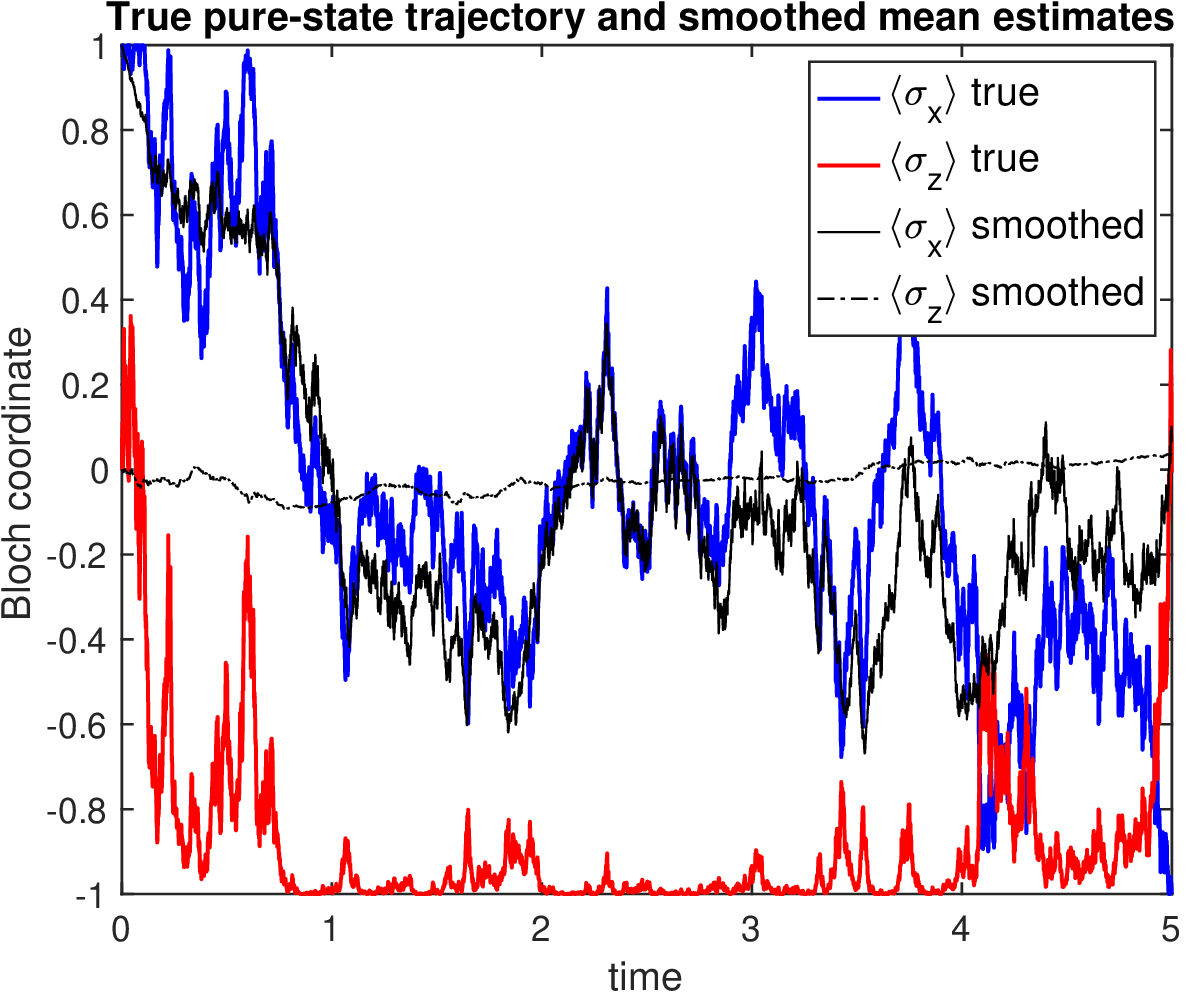}
\includegraphics[height=0.2\textwidth]{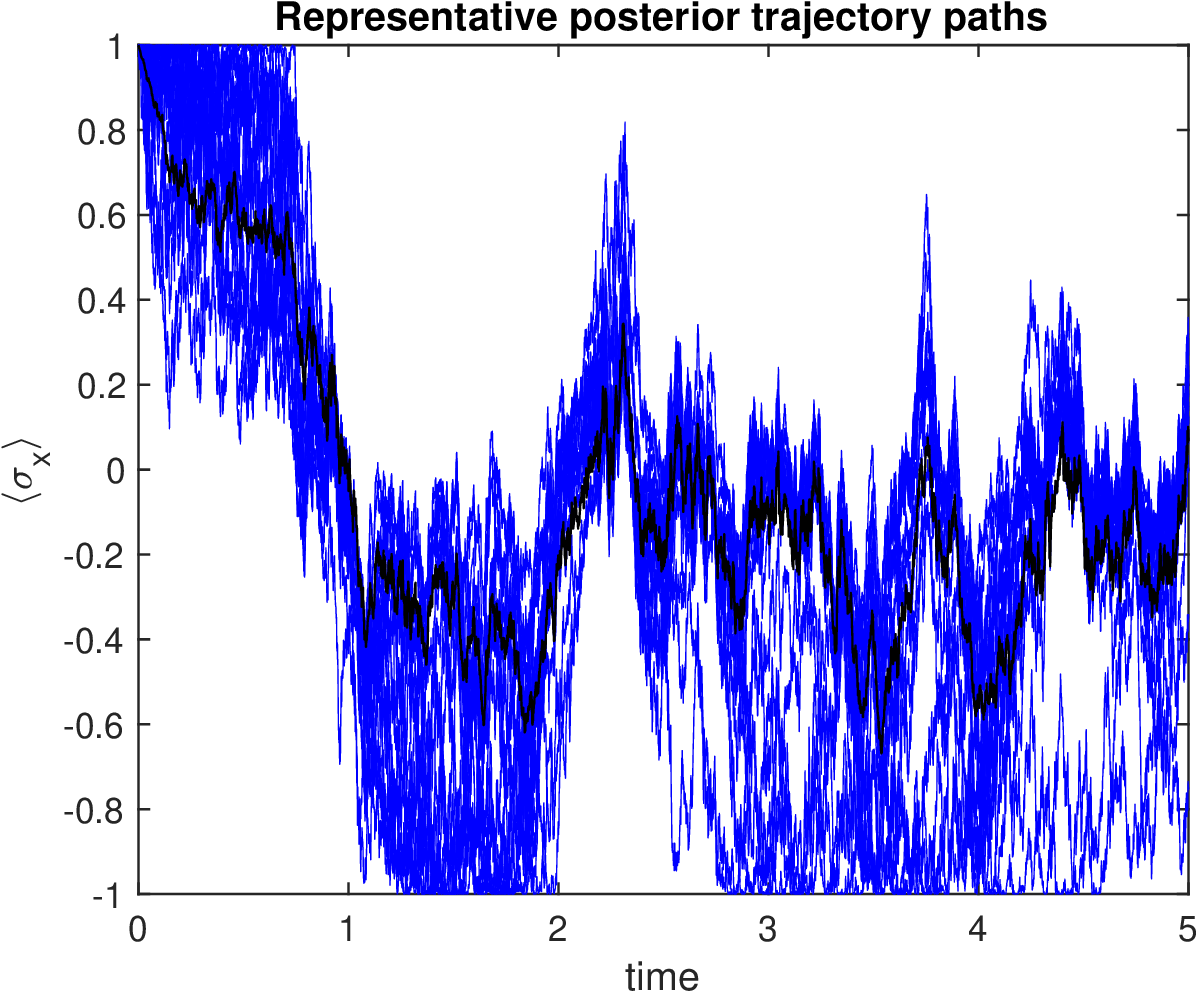}
\includegraphics[height=0.2\textwidth]{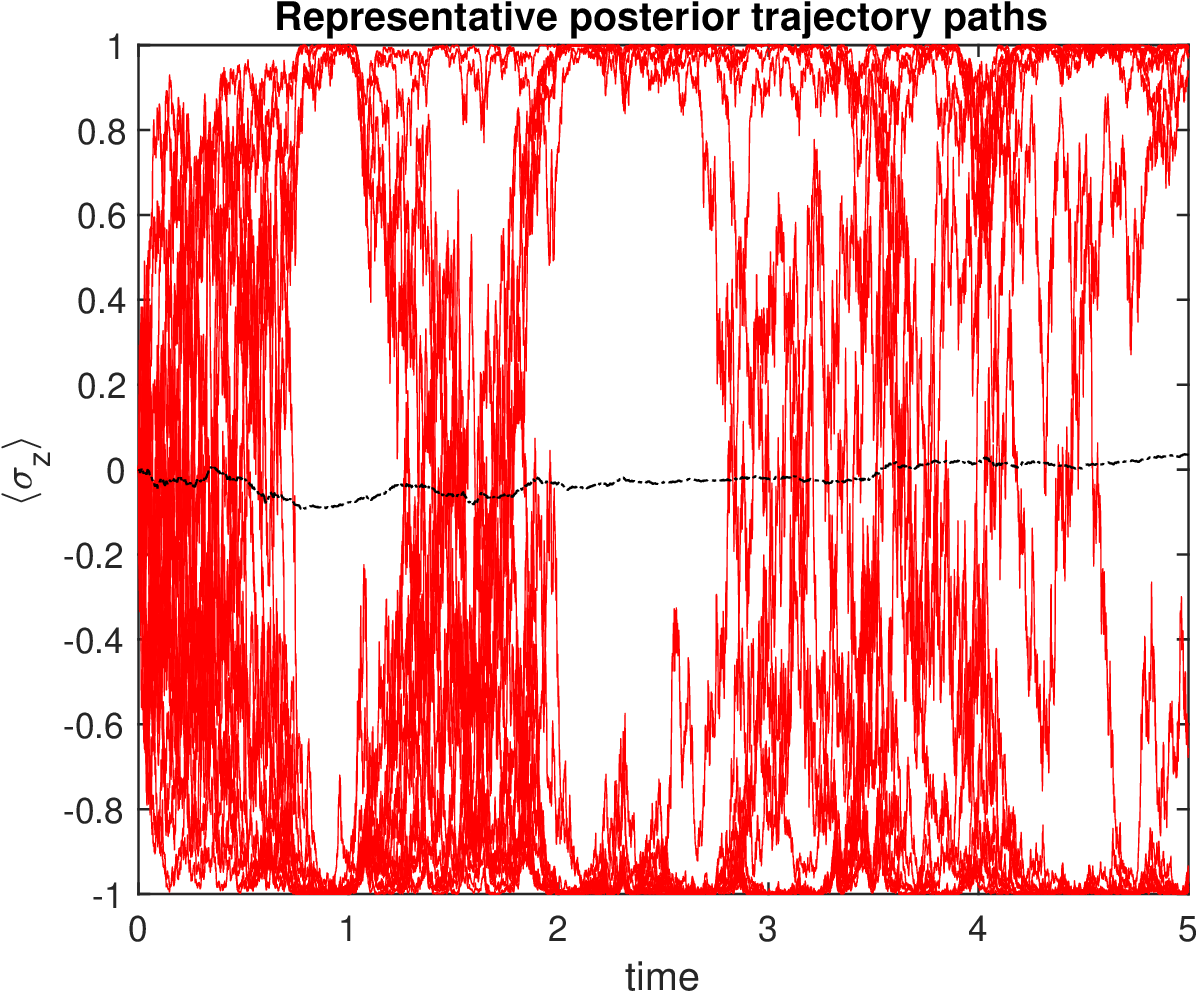}
\includegraphics[height=0.2\textwidth]{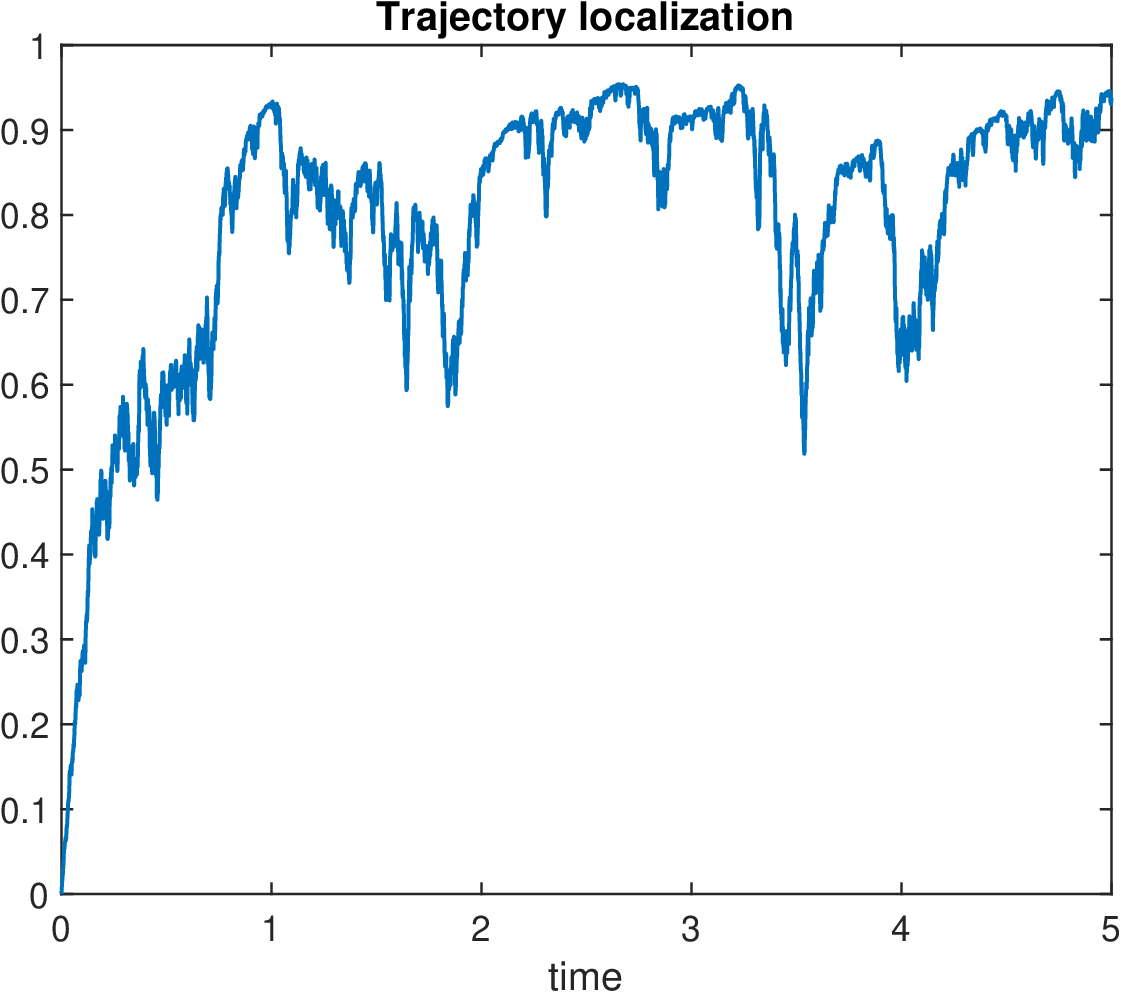}
\ec
\caption{Smoothing by particle filtering. Subplots are numbered 1-4. 1: True pure quantum trajectory and smoothed reconstruction. 
2-3: 20 highest weight particle trajectories approximating the posterior $\cP(\psi_{0:T}\mid x_{0:T})$. 
4: Smoothed localization $\E((\psi_t^\dagger \sigma_z \psi_t)^2\mid x_{0:T})$. $T=5, \kappa_x=0.12,\kappa_z=1.0, dt=0.001, N=500$. %Effective sample size $N_\text{eff}=487$ out of $N=500$ trajectories.
}
\label{fig:bimodal-dynamics}
\end{figure}
%hybrid_quantum_smoothing_bimodal_no_resampling_fixed.m

While according to the postulates of quantum theory, the density matrix (together with the effect matrix as discussed in section~\ref{section:gammelmark}) is all that is needed to predict measurement outcomes, 
this experiment demonstrates that
the posterior over quantum states 
\beaa
\cP\left(
\psi_t\mid x_{0:T}
\right)
\quad\text{or}\quad \cP(\psi_{0:T}\mid x_{0:T})
\eeaa
solves a richer inference problem: it also
represents uncertainty about the unobserved environmental record and the
associated latent pure-state history.
The bimodality in this example illustrates the additional
inferential content retained by Bayesian posterior distribution while
remaining fully consistent with the operational role of the density matrix.

\section{Discussion}

We have formulated diffusive hybrid quantum-classical dynamics as an ordinary stochastic process on an enlarged state space containing both quantum and classical variables. The density operator is obtained as a second moment of this process. Requiring this second moment to evolve linearly and autonomously recovers the hybrid Lindblad dynamics and its stochastic unravelings, while the joint covariance $(C,\Gamma,Q)$ describes quantum noise, classical noise, and their correlations within a single stochastic model.

The main conceptual consequence is that conditioning on an observed classical trajectory becomes an ordinary Bayesian inference problem. The filtered density matrix is the second moment of a causal posterior over latent quantum states, while conditioning on the complete observation record gives a smoothed posterior and a corresponding smoothed second moment. The familiar stochastic master equation and retrograde effect operator arise as operator-level representations of the forward and backward Bayesian recursions. In particular, the linear unraveling provides the natural setting for the unnormalized forward filter and its Hilbert-Schmidt adjoint, whereas the normalized physical filter follows by change of measure and normalization.

This perspective also clarifies the relation between several notions of quantum smoothing. The smoothed density matrix defined here estimates the latent state of the stochastic unraveling conditioned on the complete observation record. It is not a replacement for the past quantum state. The latter answers the operationally different question of predicting or retrodicting the outcome of an additional measurement at an intermediate time, whose backaction changes the probability of the future record. For this reason the relevant object is the pair $(\rho_t^F,E_t)$, rather than $\rho_t^S$
 alone. The two constructions therefore correspond to different conditional-inference questions within the same Bayesian framework.

The trajectory posterior contains still more information than either filtered or smoothed density-matrix second moments. Our second numerical example illustrates this by producing a bimodal posterior over latent pure-state trajectories whose density-matrix average is nearly maximally mixed. This should not be interpreted as information beyond the operational content of the density matrix for measurements on the reduced quantum system. Rather, it is information about the latent stochastic realization associated with a particular unraveling. Such information becomes relevant whenever the scientific question concerns hidden environmental records, dynamical histories, switching events, or parameters governing the underlying stochastic process.

The distinction is particularly important away from the saturated-noise limit
\[
Q=\Gamma^\dagger C^{-1} \Gamma
\]
At saturation the observed classical noise determines the quantum noise and, for a known pure initial state, the measurement record determines a pure trajectory. Away from saturation a residual quantum-noise component remains unobserved, so even the complete classical record leaves a posterior distribution over possible quantum trajectories. The Bayesian formulation provides a natural description of this uncertainty rather than replacing it by a single representative trajectory.

Computationally, this formulation permits the direct use of classical sequential Monte Carlo methods. The simple particle filter and path-space smoother used here are intended primarily as demonstrations; for long records their familiar weight-degeneracy problem calls for resampling, forward-filter backward-simulation, or related particle-smoothing techniques. The essential point is that no specifically quantum modification of these algorithms is required once the hybrid dynamics has been represented as a classical latent stochastic process.

An attractive extension is parameter learning. Unknown Hamiltonian couplings, Lindblad parameters, noise covariances, or initial-state parameters, collectively denoted by $\theta$, can be included in the probabilistic model. The unnormalized filter yields the likelihood ratio $\Tr(\sigma_T)=\Lambda_T(\theta)=\cP^{(1)}_\theta(x_{0:T})/\cP^{(0)}_\theta(x_{0:T})$ so that physical likelihood factorizes as 
\[
\cP^{(1)}_\theta(x_{0:T})=\Lambda_T(\theta) \cP_\theta^{(0)}(x_{0:T})
\]

The likelihood $\cP_\theta^{(0)}(x_{0:T})$ is particularly simple in the present setting: for $\eta=0$ we have $c_a=0$ so that the classical dynamics decouples from the quantum state and reduces to $dx_i =f_i^c(x_,t) dt+dW_i$. Its likelihood is therefore the ordinary classical diffusion likelihood. 
For parameters that affect only the quantum dynamics, or latent classical dynamics coupled to the observations only through the quantum state, $\cP^{(0)}$ is parameter independent and the likelihood can be computed from the unnormalized filter. 
Otherwise, the $\cP_\theta^{(0)}(x_{0:T})$ must be included explicitly.

This likelihood formulation combines naturally with the smoothing framework developed above. While the forward unnormalized filter provides the likelihood ratio, together with the reference likelihood it determines the physical data likelihood; particle smoothing supplies samples or sufficient statistics from the posterior over latent trajectories. This suggests direct likelihood optimization as well as expectation-maximization \cite{dempster1977} and related Bayesian parameter-learning methods.

More broadly, the formulation proposed in this paper separates two roles that are often combined in descriptions of continuously monitored quantum systems. Density matrices provide the operational quantum state needed for measurement predictions, while stochastic unravelings provide latent dynamical models on which ordinary Bayesian inference can be performed. Treating these levels explicitly makes their relation transparent and gives access, within one framework, to filtering, retrodiction, smoothing, trajectory inference, and parameter learning.

\appendix 
\section{Proof of Lemma~\ref{lem:rhoJ_dynamics}}
\label{prooflemma1}

Write the stochastic equation for $\psi$ as
\begin{equation*}
    d\psi_j = F_j\,dt + d\chi_j,
    \qquad
    F=f\psi,
    \qquad
    d\chi=g_a\psi\,d\xi_a.
\end{equation*}
with $j=1,\ldots, d$ the components of the wave function. For a general complex diffusion, define the covariance,
pseudo-covariance, and quantum--classical cross covariance by
\beaa
    \mathbb{E}[d\chi_j d\bar\chi_k]
    =
    K_{jk}\,dt,
\qquad
    \mathbb{E}[d\chi_j d\chi_k]
    =
    S_{jk}\,dt,
\qquad
    \mathbb{E}[dW_i d\chi_j]
    =
    R_{ij}\,dt.
\eeaa
In the present case,
\bea
    K_{jk}
    =
    Q_{ab}
    (g_a\psi)_j
    \overline{(g_b\psi)_k},
\qquad
    R_{ij}
    =
    \Gamma_{ia}(g_a\psi)_j.
    \label{eq:KR_definition}
\eea
and bar denotes complex conjugate.
Introduce the Wirtinger derivatives
\[
    \partial_{\psi_j}
    =
    \frac{1}{2}
    \left(
        \frac{\partial}{\partial\operatorname{Re}\psi_j}
        -
        i\frac{\partial}{\partial\operatorname{Im}\psi_j}
    \right),
    \qquad
    \partial_{\bar\psi_j}
    =
    \frac{1}{2}
    \left(
        \frac{\partial}{\partial\operatorname{Re}\psi_j}
        +
        i\frac{\partial}{\partial\operatorname{Im}\psi_j}
    \right).
\]

We denote the backward generator by $\mathcal L_{\rm B}$ and its
formal adjoint, the forward Fokker--Planck operator, by
$\mathcal L_{\rm F}$.
They are related  by
\[
    \int dx\,d\psi\,
    \cP\,\mathcal L_{\rm B}\Phi
    =
    \int dx\,d\psi\,
    \Phi\,\mathcal L_{\rm F}\cP,
    %\label{forwardbackwardFP}
\]
assuming that the boundary terms vanish.
Hence,
\begin{equation}
    -\partial_t\Phi
    =
    \mathcal L_{\rm B}\Phi,
    \qquad
    \partial_t\cP
    =
    \mathcal L_{\rm F}\cP.
\end{equation}
For the dynamics Eq.~\ref{fp2lindblad1}, we obtain 
\begin{align}
    \cL_{\rm B} \Phi
    ={}&
    F_j\partial_{\psi_j}\Phi
    +
    \bar F_j\partial_{\bar\psi_j}\Phi
    +
    h_i\partial_i\Phi
    +
    K_{jk}
    \partial_{\psi_j}\partial_{\bar\psi_k}\Phi
    +
    \frac{1}{2}
    S_{jk}
    \partial_{\psi_j}\partial_{\psi_k}\Phi
    +
    \frac{1}{2}
    \bar S_{jk}
    \partial_{\bar\psi_j}\partial_{\bar\psi_k}\Phi
    \nonumber\\
    &+
    R_{ij}
    \partial_i\partial_{\psi_j}\Phi
    +
    \bar R_{ij}
    \partial_i\partial_{\bar\psi_j}\Phi
    +
    \frac{1}{2}
    C_{ii'}
    \partial_i\partial_{i'}\Phi.
    \label{eq:complex_backward_generator}\\
    \cL_{\rm F}\cP
    ={}&
    -\partial_{\psi_j}
    \left(
        F_j {\cal P}
    \right)
    -
    \partial_{\bar\psi_j}
    \left(
        \bar F_j {\cal P}
    \right)
    -
    \partial_i
    \left(
        h_i {\cal P}
    \right)
    +
    \partial_{\psi_j}
    \partial_{\bar\psi_k}
    \left(
        K_{jk} {\cal P}
    \right)
    +
    \frac{1}{2}
    \partial_{\psi_j}
    \partial_{\psi_k}
    \left(
        S_{jk} {\cal P}
    \right)
    +
    \frac{1}{2}
    \partial_{\bar\psi_j}
    \partial_{\bar\psi_k}
    \left(
        \bar S_{jk} {\cal P}
    \right)
    \nonumber\\
    &+
    \partial_i
    \partial_{\psi_j}
    \left(
        R_{ij} {\cal P}
    \right)
    +
    \partial_i
    \partial_{\bar\psi_j}
    \left(
        \bar R_{ij} {\cal P}
    \right)
    +
    \frac{1}{2}
    \partial_i
    \partial_{i'}
    \left(
        C_{ii'} {\cal P}
    \right).
    \label{eq:complex_fokker_planck}
   \end{align}
Define $\Phi_{kl}=\psi_k\bar\psi_l$. 
Its nonzero Wirtinger derivatives are
\beaa
    \partial_{\psi_j}\Phi_{kl}
    =
    \delta_{jk}\bar\psi_l,
    \qquad
    \partial_{\bar\psi_j}\Phi_{kl}
    =
    \psi_k\delta_{jl},
    \qquad
    \partial_{\psi_j}
    \partial_{\bar\psi_m}\Phi_{kl}
    =
    \delta_{jk}\delta_{ml}.
\eeaa
The purely holomorphic and antiholomorphic second derivatives vanish:
\[
    \partial_{\psi_j}
    \partial_{\psi_m}\Phi_{kl}
    =
    0,
    \qquad
    \partial_{\bar\psi_j}
    \partial_{\bar\psi_m}\Phi_{kl}
    =
    0.
\]
Consequently, the pseudo-covariance $S_{jk}$ does not contribute to
the evolution of $\psi\psi^\dagger$.

Using Eq.~\ref{joint} and~\ref{eq:complex_fokker_planck} and integrating by parts, gives
\begin{align}
    \frac{\partial(\rho_J)_{kl}}{\partial t}
    ={}&
    \int d\psi\,
    \cP(\psi,x,t)
    \left(
        F_k\bar\psi_l
        +
        \psi_k\bar F_l
        +
        K_{kl}
    \right)
    -
    \partial_i
    \int d\psi\,
    \cP(\psi,x,t)
    \left(
        R_{ik}\bar\psi_l
        +
        \psi_k\bar R_{il}
    \right)
    \nonumber\\
    &-
    \partial_i
    \int d\psi\,
    \cP(\psi,x,t)
    A_i\psi_k\bar\psi_l
    +
    \frac{1}{2}
    \partial_i\partial_{i'}
    \int d\psi\,
    \cP(\psi,x,t)
    C_{ii'}\psi_k\bar\psi_l.
    \label{eq:rhoJ_component_proof}
\end{align}

Using $F=f\psi$ and Eqs.~\ref{eq:KR_definition}, we have
\beaa
    F_k\bar\psi_l
    =
    \bigl(f\psi\psi^\dagger\bigr)_{kl},
    \qquad
    \psi_k\bar F_l
    =
    \bigl(\psi\psi^\dagger f^\dagger\bigr)_{kl},
    \qquad
    K_{kl}
    =
    \left(
        Q_{ab}
        g_a\psi\psi^\dagger g_b^\dagger
    \right)_{kl},
    \qquad
    R_{ik}\bar\psi_l
    =
    \left(
        \Gamma_{ia}
        g_a\psi\psi^\dagger
    \right)_{kl}.
\eeaa
Substitution into Eq.~\eqref{eq:rhoJ_component_proof} and returning to
matrix notation yields Eq.~\eqref{fp2lindblad2}.

\section{Different noise choices}

\label{noisechoice}
The noise $d\xi_a=d\xi_{a,r} + id\xi_{a,i}$ is complex valued with possibly independent or correlated real and imaginary components. 
Here we will discuss how different choices of the complex noise affect the hybrid dynamics Eq.~\ref{fp2lindblad3} and its unravelings Eq.~\ref{sse3} and~\ref{dx}. 
The hybrid dynamics Eq.~\ref{fp2lindblad3} depends on the noise through the covariances $Q_{ab}dt=\av{d\xi_a d\xi_b^*}$ and $\Gamma_{ia}dt=\av{dW_i d\xi_a}$.
We may assume without loss of generality that $Q$ is real diagonal and that $\Gamma_{ia}$ is real
\footnote{Consider 
the linear transformation 
\beaa
\tilde L_c = R_{ca}^* L_a\qquad d\xi_a=R^*_{ca}d\tilde \xi_c
\eeaa
with $R_{ca}\in \C$. 
This implies $Q = R^\dagger \tilde Q R$ with $\tilde Q_{cd}=\av{d\tilde \xi_c d\tilde \xi_d^\dagger}$, $\Gamma_{ia} = R^*_{ca}\tilde \Gamma_{ic}$ with $\tilde \Gamma_{ic}=\av{dW_i d\tilde\xi_c}$ and $\tilde c_c =R^*_{ca}c_a$. Since $Q_{ab} L_a L_b^\dagger = \tilde Q_{ab} \tilde L_a \tilde L_b^\dagger, (L_a -c_a)d\xi_a = (\tilde L_a -\tilde c_a) d\tilde \xi_a$ and $\Gamma_{ia} L_a = \tilde \Gamma_{ia} \tilde L_a$, this transformation leaves the unravelings Eq.~\ref{sse3} and~\ref{dx}, and therefore also the joint density dynamics Eq.~\ref{fp2lindblad3} invariant. 
When $R=DU$, with $U$ the unitary transformation that transform $Q$ to its eigenbasis and $D$ an arbitrary diagonal matrix, $\tilde Q$ is real and diagonal for any $D$ and $\tilde \Gamma_{ib}= \Gamma_{ia} U_{ba} D_b$ ($\Gamma = \tilde \Gamma R^*=\tilde \Gamma D^* U^*$ implies $\Gamma^* U^\dagger D^\dagger = \tilde \Gamma^*$) so that we can choose $D$ such that $\tilde \Gamma$ is real. We can thus assume $Q$ real and diagonal and $\Gamma$ real without loss of generality. }.
There are different choices for $d\xi_a$ with identical $Q_a$ and $\Gamma_{ia}$ and thus identical hybrid dynamics Eq.~\ref{fp2lindblad3}. Since $\Gamma_{ia}dt =\av{dW_i d\xi_a}$ is real, this implies that either $d\xi_a$ is real or $d\xi_{a,i}$ is uncorrelated to $dW_i$. 
We consider these two choices
\bi
\item {\bf Real noise}. We choose $d\xi_a$ real and $\av{d\xi_a^2}=Q_a$. Then $\Gamma_{ia} = \av{dW_i d\xi_a}$ is real. 
The noise term $(L_a -c_a) P d\xi_a +h.c.$ in the dynamics Eq.~\ref{sme3} for an observable $\Tr (AP)$ is given as
\beaa
d\chi_a &=& (G_a +h.c.)d\xi_a \qquad G_a = \Tr(A (L_a-c_a) P)\\
\av{d\chi_a^2}&=& \left(G_a+h.c.\right)^2 Q_a
\eeaa
The coupling to $x_i$ is  $\Gamma_{ia} (c_a +h.c)$. 
\item {\bf Circle symmetric noise}. A common choice in the unraveling literature (for instance  \cite{diosi_hybrid_2023}) is to choose $d\xi_a$ complex valued 'circle symmetric', which means  $\av{d\xi_{a,r}^2}=\av{d\xi_{a,i}^2}=\frac{Q_a}{2}dt, \av{d\xi_{a,r} d\xi_{a,i}}=0$. Define $\Gamma_{ia} dt=\av{dW_i d\xi_a}=\Gamma_{ia,r}dt+i\Gamma_{ia,i}dt$. It is real-valued when $\av{dW_i d\xi_{a,i}}=0$, i.e. $d\xi_{a,i}$ is uncorrelated to $dW_i$. 
The noise term $(L_a -c_a) P d\xi_a +h.c.$ in the dynamics Eq.~\ref{sme3} for an observable $\Tr (AP)$ is given as
\beaa
d\chi_a &=& G_a d\xi_a +h.c. \qquad G_a = \Tr(A (L_a -c_a)P)\\
\av{d\chi_a^2}&=& 2|G_a|^2 Q_a 
\eeaa
The coupling to $x$ is  $\Gamma_{ia} (c_a +h.c)$. 
\ei
Thus, the two noise choices yield identical hybrid dynamics and differ in the noise of the unraveling of the quantum dynamics. In addition, there is a difference in the maximal size of $\Gamma_{ia}$, which we illustrate in the case of a single
Lindblad operator and a single measurement and $C,Q,\Gamma$ real scalars. In the case of real noise $\Gamma$ must be chosen such that the covariance matrix of $dW, d\xi$ is
\beaa
D=
%\left(\begin{tabular}{ccc} $ C$ & $\Gamma_r$ &$\Gamma_i$\\ $\Gamma_r$ & $M_{rr}$ & $M_{ri}$\\
%$\Gamma_i$ & $M_{ir}$ & $M_{ii}$
 %\end{tabular} \right)=
 \left(\begin{tabular}{cc} $ C$ & $\Gamma$ \\ $\Gamma$ & $Q$
 \end{tabular} \right)\ge 0
\eeaa
which implies $\Gamma^2 \le QC$. In the case of circle symmetric noise, $\Gamma$ must be chosen such that the covariance matrix of $dW, d\xi_r,d\xi_i$ is
\beaa
D=
%\left(\begin{tabular}{ccc} $ C$ & $\Gamma_r$ &$\Gamma_i$\\ $\Gamma_r$ & $M_{rr}$ & $M_{ri}$\\
%$\Gamma_i$ & $M_{ir}$ & $M_{ii}$
 %\end{tabular} \right)=
 \left(\begin{tabular}{ccc} $ C$ & $\Gamma$ & $0$\\ $\Gamma$ & $\frac{Q}{2}$ & $0$\\
 $0$ & $0$ & $\frac{Q}{2}$
 \end{tabular} \right)\ge 0
\eeaa
which implies $\Gamma^2 \le \frac{QC}{2}$. 
Thus real noise unravelings allow for a stronger coupling between the quantum and classical variable than the circle symmetric complex noise unraveling. Since the imaginary noise is uncorrelated to the classical noise, the use of circle symmetric noise does not seem natural for unravelings of the hybrid dynamics. We therefore will continue with real valued noise. 

\section{Derivation of stochastic master equation}
\label{appendixSME}
\subsection{The Kushner-Stratonovich equation}
\label{section:KS}
Let \(z_t \in \R^d\) denote the hidden state and let \(x_t\in \R^n\) denote the
observed process. We consider the stochastic dynamics
\beaa
 dz_t&=& a(z_t,x_t,t)dt + B(z_t,x_t,t)d\xi_t, \\
 dx_t&=& h(z_t,x_t,t)dt + dW_t
\eeaa
with $z_t, a\in \R^d, \xi_t\in \R^k, B\in \R^{d\times k}, x_t, h, dW_t \in \R^n$ and
with noise covariance
\beaa
 \left<dW_i dW_j\right> &= C_{ij}dt, \qquad  \left<d\xi_a d\xi_b\right> = Q_{ab}dt\qquad
 \left<dW_i d\xi_a\right> = \Gamma_{ia}dt .
\eeaa
In this section we derive the Kushner--Stratonovich (KS) equation which describes the dynamical evolution of the filtered estimate $\pi_t(z_t|x_{0:t})$. The usual derivation of the KS equation assumes \(\Gamma=0\). Here, we keep
\(\Gamma\) arbitrary which is essential for the hybrid quantum-classical dynamics. 
Subsequently, we apply the KS equation to the unraveling dynamics Eq.~\ref{sse3} and~\ref{dx} and 
obtain an equation for the dynamical evolution of the filtered quantum density matrix.

Let \(\phi(z)\) be a smooth test function. By It\^o calculus,
\begin{align}
 d\phi
 =
 {\cal L}\phi\,dt
 + D_a \phi d\xi_a ,\label{dphi}
\end{align}
where $\cL\phi= a\cdot \nabla \phi  +\frac12 \Tr\left(BQB^T \nabla^2 \phi\right)$  is the backward generator of the hidden process,
with \(x_t\) considered as a given time-dependent input  and $D_a \phi =(B_{\cdot a}\cdot \nabla)\phi$. 
Define the filtered expectation
\begin{align}
 \pi_t(\phi)
 =
 \E_t [\phi \mid x_{0:t}]=\int dz \phi(z) \pi_t(z\mid x_{0:t})\label{pi_definition}
\end{align}
The change in $\pi_t(\phi)$ in a small time step $dt$ is due to the change in $\phi$ Eq.~\ref{dphi} and due to the change in $\E_t(\phi_{t+dt}\mid x_{0:t})$ to $\E_{t+dt}(\phi_{t+dt}\mid x_{0:t},dx)$ as a result of the incremented observation $dx$. The first contribution is given by
\bea
\E_t(\phi_{t+dt}\mid x_{0:t}) = \E_t(\phi_t+d\phi\mid x_{0:t})=\pi_t(\phi)+\pi_t(\cL \phi)dt \label{first}
\eea
For the second, we observe that the relation between $\E_t(\phi_{t+dt} \mid x_{0:t})$ and $\E_{t+dt}(\phi_{t+dt} \mid x_{0:t},dx)$ can be obtained by noting that to first order in dt, the update is determined by the conditional covariance with the observation innovation
\footnote{\label{gaussian_conditional}
Denote $x_a=\phi_{t+dt}, x_b=dx$ two stochastic variables that are Gaussian distributed with 
 means $\mu_a,\mu_b$ and covariance matrix $\Sigma=\left(\begin{matrix} \Sigma_{aa} & \Sigma_{ab} \\ \Sigma_{ba} & \Sigma_{bb} \end{matrix}\right)$.
Then
\beaa
p(x_a|x_b) &=& \cN(x_a| \mu_{a|b}, \Sigma_{a|b})\\
\Sigma_{a|b}&=&\Sigma_{aa} - \Sigma_{ab} \Sigma_{bb}^{-1}\Sigma_{ba}\qquad
\mu_{a|b}=\mu_a+\Sigma_{ab}\Sigma_{bb}^{-1}(x_b - \mu_b)
\eeaa
Since all quantities are  conditioned in $x_{0:t}$, the result follows by noting that $\mu_a =\E(\phi_{t+dt}\mid x_{0:t}), \Sigma_{ab} =\text{cov}_t(\phi_{t+dt}, dx), \Sigma_{bb}=\text{cov}_t(dx,dx)$ and $x_b-\mu_b= dx- \pi(h)dt$. 
}
\bea
\E_{t+dt}(\phi_{t+dt}\mid x_{0:t},dx)=\E_t(\phi_{t+dt}\mid x_{0:t}) + \text{cov}_t(\phi_{t+dt}, dx_i) \text{cov}_t(dx_i,dx_j)^{-1}dI_j\label{second}
\eea
where we define the innovation process $dI_i=dx_i-\pi_t(h_i)dt$. Note that $\text{cov}_t(dx_i,dx_j) =C_{ij}dt$ and compute
\bea
\text{cov}_t(\phi_{t+dt}, dx_i)&=& \text{cov}_t(\phi_t+d\phi_t ,h_i dt+dW_i)=
 \text{cov}_t(\phi_t,h_i dt) +\text{cov}_t(d\phi_t,dW_i)+o(dt)\nonumber\\
 &=&[\pi_t(\phi h_i) -\pi_t(\phi)\pi_t(h_i)]dt  +\pi_t(d\phi_t dW_i)+o(dt)\label{third}
\eea
Combining Eqs.~\ref{first}, \ref{second} and~\ref{third}, the KS equation is
\begin{align}
\boxed{
 d\pi_t(\phi)
 =
 \pi_t({\cal L}\phi)dt
 +
 \left[
   \pi_t(\phi h_i)
   -\pi_t(\phi)\pi_t(h_i)
   +\frac{1}{dt}\pi_t(d\phi_t dW_i)
 \right]
 C^{-1}_{ij}dI_j .
}
\label{eq:KS-correlated}
\end{align}

The first two terms in square brackets give the usual KS gain when $\Gamma=0$. The last
term is due to the quadratic covariation between the hidden process and the
observation process.
%Equivalently, if \(p_t(z)=p(z_t\mid x_{0:t})\) denotes the filtered density,
%then Eq.~\eqref{eq:KS-correlated} can be written in the strong form
%\begin{align}
% dp_t(z)
% =
% {\cal L}^\dagger p_t(z)dt
% +
% \left[
%   \bigl(h_i(z)-\pi_t(h_i)\bigr)p_t(z)
%   -\partial_\mu\bigl(\Gamma_{i\alpha}b_{\mu\alpha}(z)p_t(z)\bigr)
% \right]
% C^{-1}_{ij}dI_j .
%\end{align}
%The derivative term is the strong-form representation of
%\(\pi_t(M_i\phi)\) after integration by parts.

\subsection{Application to the filtered density matrix}
\label{section:filtereddensity}
We now apply Eq.~\eqref{eq:KS-correlated} to the hybrid dynamics Eqs.~\ref{sse3} and~\ref{dx} where the hidden variable is the quantum state \(\psi\), while the
observed variable is the classical trajectory \(x_{0:t}\). 
The object of interest is not the full filtered distribution over quantum
states, but its second moment, i.e. $\rho_t=\pi_t(P_\psi)$ as given by Eq.~\ref{pi_definition} with $\phi=P_\psi=\psi\psi^\dagger$. The stochastic equation for \(P_\psi\) is given by Eq.~\ref{sme3}.
Note that $P_\psi$ is complex valued while the derivation above was done for real variables. This generalization is valid when realizing that the complex dynamics can always be written as real dynamics in double the dimension. The only term that needs attention is $\pi_t(dP_\psi dW_i)=\Gamma_{ia}(L_a-c_a)P_\psi dt+\mathrm{h.c.}$. 
 Eq~\ref{eq:KS-correlated} becomes
\begin{align}
\begin{aligned}
 d\rho_t
 =
 &\cL_{L}(\rho_t)dt
 +
 \left[
   \pi_t(P h_i)
   -\rho_t\pi_t(h_i)
   +
   \pi_t\left(
     \Gamma_{ia}(L_a-c_a)P+\mathrm{h.c.}
   \right)
 \right]
 C^{-1}_{ij}dI_j ,
\end{aligned}
\label{eq:rho-filter-general}
\end{align}
Substituting $h_i$ from Eq.~\ref{dx}, Eq.~\ref{eq:rho-filter-general} reduces to
\begin{align}
\boxed{
\begin{aligned}
 d\rho_t
 =
 &\cL_L(\rho_t)dt+
 \left[
   \Gamma_{ia}(L_a-\bar c_{a,t})\rho_t+\mathrm{h.c.}
 \right]
 C^{-1}_{ij}
 dI_{j,t}\\
 dI_{j,t}=&
   dx_{j,t}-
   \left(
     f_{j,t}^c+\Gamma_{jb}\bar c_{b,t}+\mathrm{h.c.}
   \right)dt
\end{aligned}
}\nonumber
\end{align}
 where we define $\bar c_{a,t}=\pi_t(c_a)$ and $\cL_{L}$ is given by Eq.~\ref{lindblad}.

\section{Proof of Lemma~\ref{lemma:linear}}
\label{appendixlemma3}

The linear filter dynamics Eq.~\ref{eq:rho-filter} for $\eta=0$ satisfies 
\begin{equation}
d\sigma_t
=
\cL_L(\sigma_t)\,dt
+
\mathcal K_{i,t}(\sigma_t)
C^{-1}_{ij}dY_j,
\label{eq:app-linear-sigma}
\end{equation}
where
\begin{equation}
dY_j:=dx_j-f_j^c\,dt,
\qquad 
\mathcal K_{i,t}(A)
=
\Gamma_{ia}L_{a,t}A
+ h.c.
\nonumber
\end{equation}
and $\cL_L$ is given by Eq.~\ref{lindblad}.

Define 
\begin{equation}\rho_t:=\frac{\sigma_t}{Z_t}\qquad 
Z_t:=\Tr\sigma_t
\label{Zt}
\end{equation}
We will show that $\rho_t$ satisfies Eq.~\ref{eq:rho-filter} for the norm-preserving unraveling $\eta=1$. We assume \(Z_t>0\), as is the case whenever the realized record has nonzero likelihood under the linear model.
Define
\begin{equation}
m_{i,t}
:=
\Gamma_{ia}\Tr(L_{a,t}\rho_t)
+\mathrm{h.c.}
\nonumber
\end{equation}
Because \(\mathcal L_L\) is trace preserving, taking the trace of Eq.~\eqref{eq:app-linear-sigma} gives
\begin{equation}
dZ_t
=
Z_t m_{i,t}C^{-1}_{ij}dY_j\qquad Z_0=1
\label{eq:app-dZ}
\end{equation}

We now apply It\^o's formula to \(Z_t^{-1}\):
\be
d(Z^{-1})
=
-Z^{-2}dZ+Z^{-3}(dZ)^2=
-Z_t^{-1}m_{i,t}C^{-1}_{ij}dY_j
+
Z_t^{-1}m_{i,t}C^{-1}_{ij}m_{j,t}\,dt.\label{eq:app-inverse-Z}
\ee
where we used $(dZ_t)^2=Z_t^2 m_{i,t}C^{-1}_{ij}m_{j,t}\,dt$.

Using \(\rho_t=Z_t^{-1}\sigma_t\), the It\^o product rule yields
\begin{equation}
d\rho_t
=
Z_t^{-1}d\sigma_t
+
\sigma_t\,d(Z_t^{-1})
+
d\sigma_t\,d(Z_t^{-1}).
\label{eq:app-product}
\end{equation}
The quadratic covariation term is
\begin{align}
d\sigma_t\,d(Z_t^{-1})
&=
-
\mathcal K_{i,t}(\rho_t)
C^{-1}_{ij}
m_{j,t}\,dt.
\label{eq:app-cross-term}
\end{align}
Substituting Eqs.~\eqref{eq:app-linear-sigma}, \eqref{eq:app-inverse-Z}, and \eqref{eq:app-cross-term} into Eq.~\eqref{eq:app-product}, and using the linearity of \(\cL_L\) and \(\mathcal K_{i,t}\), gives
\begin{equation}
d\rho_t
=
\mathcal L_L(\rho_t)\,dt
+
\left[
\mathcal K_{i,t}(\rho_t)
-
m_{i,t}\rho_t
\right]
C^{-1}_{ij}
\left[
dY_j-m_{j,t}\,dt
\right].
\label{eq:app-normalized-general}
\end{equation}
Since $
\bar c_{a,t}=\Tr(L_{a,t}\rho_t)$, 
$m_{i,t}
=
\Gamma_{ia}\bar c_{a,t}
+\mathrm{h.c.}
$ and 
$
\mathcal K_{i,t}(\rho_t)-m_{i,t}\rho_t
=
\Gamma_{ia}
\left(
L_{a,t}-\bar c_{a,t}
\right)\rho_t
+\mathrm{h.c.}
$, so that Eq.~\ref{eq:app-normalized-general} is equal to  Eq.~\ref{eq:rho-filter} for the norm-preserving choice.
In other words, $\rho_t$ defined by Eq.~\ref{eq:rho-sigma-section} satisfies Eq.~\ref{eq:rho-filter}.

To demonstrate Eq.~\ref{eq:TRsigma}, 
let $\cP^{(0)}(x_{0:t})$ and $\cP^{(1)}(x_{0:t})$ denote the likelihood densities of the observed record under the linear and normalized unravelings. Their likelihood ratio is 
\[
\Lambda_t = \frac{\cP^{(1)}(x_{0:t})}{\cP^{(0)}(x_{0:t})}
\]
Under the linear unraveling the observation increment has drift $f^c_i$, whereas under the normalized unraveling its drift is $f_i^c+m_{i,t}$, with the same covariance $C$. Hence the ratio of their one-step Gaussian likelihoods is
\beaa
\frac{\cP^{(1)}(dx_t|x_{0:t})}{\cP^{(0)}(dx_t|x_{0:t})}=\exp\left[
 m_{i,t}C^{-1}_{ij}dY_j
-\frac{1}{2}m_{i,t}C^{-1}_{ij}m_{j,t}\,dt
\right].
\eeaa
Because path likelihoods factorize one step at the time, 
\beaa
\Lambda_{t+dt} =\Lambda_t \frac{\cP^{(1)}(dx_t|x_{0:t})}{\cP^{(0)}(dx_t|x_{0:t})}=\Lambda_t e^{d\lambda_t}
\eeaa
where we define $d\lambda_t = m_i C^{-1}_{ij} dY_j -\frac12 m_i C_{ij}^{-1}m_j dt$.
Expanding to It\^o order $e^{d\lambda_t} = 1 +d\lambda_t + \frac12(d\lambda_t)^2 +o(dt)$ we obtain
\[
e^{d\lambda_t}=1+m_i C_{ij}^{-1}dY_j +o(dt)
\]
Thus
\bea
d\Lambda_t = \Lambda_{t+dt}-\Lambda_t = \Lambda_t m_i C_{ij}^{-1} dY_j
\eea
Equivalently, with $\lambda_t =\log \Lambda_t$, the future likelihood ratio introduced in section~\ref{retrograde} satisfies $\ell_t =\Lambda_T/\Lambda_t = \exp(\lambda_T-\lambda_t)$. 

Since $Z_t=\Tr(\sigma_t)$ and $\Lambda_t$ satisfy the same SDE and $Z_0=\Lambda_0=1$, we obtain $\Lambda_t =Z_t$ which proves 
Eq.~\ref{eq:TRsigma}.

\section{Bayesian derivation of the  retrograde filter equation}
\label{sec:retrograde-filter}
We derive the retrograde filter equation governed by the effect operator $E_t$. As pioneered by \cite{gammelmark_past_2013} this can be done for the linear unraveling where we can use the Hilbert-Schmidt adjoint for linear operators. First, we connect the backward messages of the linear and non-linear processes. 

Define the backward message of the non-linear process $\eta=1$ in the usual Bayesian sense:
\bea
b_t(\phi_t,x_t) = \cP^{(1)}(x_{t:T}\mid \phi_t,x_t)\qquad b_T(\phi,x_T)=1\label{bmessage}
\eea
where $x_{t:T}=\{dx_s : t\le s < T\}$ denotes the future record of increments, while $x_t$ denotes the current observed state and $\phi_t$ denotes a normalized wave function. The Markov property on the joint state space $(\phi,x)$ and the observed $dx_t$
gives the backward Chapman-Kolmogorov recursion 
\bea
b_t(\phi,x)= \int  \cP^{(1)}(d\phi',dx_t\mid \phi,x) b_{t+dt}(\phi',x')\label{brecursion}
\eea
with $x'=x+dx_t$. 
The one-step normalized process and unnormalized process are related by the Doob transform Eq.~\ref{doob_transform} discussed in section~\ref{section:linear_nonlinear}. Define the backward messages for the unnormalized process $\beta_t(\psi,x) = \|\psi\|^2 b_t(\hat\psi,x)$. Then the recursion for $b_t$ and the Doob transform define the recursion for $\beta_t$ as
\bea
\beta_t(\psi,x)= \int\cP^{(0)}(d\psi',dx_t\mid \psi,x) \beta_{t+dt}(\psi',x')\qquad \beta_T(\psi,x_T)=\|\psi\|^2\label{betarecursion}
\eea
Define the effect operator $E_t$ implicitly as 
\bea
\beta_t(\psi,x)=q_t \Tr\left(E_t(x)P_\psi\right)\qquad q_t=\cP^{(0)}(x_{t:T}|x_{0:t})  \label{betaE}
\eea
The end condition $\beta_T(\psi,x_T)=\|\psi\|^2$ for all $\psi$ implies $E_T=I$.

Let \(\mathcal M_{t,dx_t}\) denote the super operator for the one-step map associated with the filtering equation Eq.~\ref{eq:rho-filter} for $\eta=0$:
\beaa
\sigma_{t+dt} =\sigma_t +d\sigma_t = \cM_{t,dx_t}(\sigma_t) \qquad \mathcal M_{t,dx_t}(A)
=A+\cL_L(A)\,dt
+\mathcal K_{i,t}(A)C^{-1}_{ij}dY_j+o(dt).
\label{eq:one-step-map-expansion}
\eeaa
where we define the super operator $\cK_{t,dx_t}(A) =    \Gamma_{ia}L_aA+\mathrm{h.c.}$ and $dY_{j,t} = dx_{j,t}-f_{j,t}^cdt$. 
$\cM_{t,dx_t}$
selects the realized increment $dx_t$, averages over unresolved quantum noise, and retains the likelihood weight of the increment.

\begin{lemm}
The backward recursion Eq.~\ref{betarecursion} and the implicit definition of $E_t$ in Eq.~\ref{betaE} imply that the effect operator $E_t$ satisfies the backward recursion
\begin{equation}
E_t(x)
=\mathcal M_{t,dx_t}^\dagger
\left(E_{t+dt}(x')\right),
\qquad
E_T=I
\label{eq:effect-recursion-section_a}
\end{equation}
The effect is related to the normalized backward message Eq.~\ref{bmessage} as
\bea
b_t(\phi,x)=\cP^{(1)} (x_{t:T}\mid \phi, x) = q_t \Tr(E_t(x)P_{\phi}).\label{bE_a}
\eea
\end{lemm}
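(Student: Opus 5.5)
The plan is to prove both claims by backward induction in $t$. I would \emph{define} $E_t$ through the recursion Eq.~\ref{eq:effect-recursion-section_a} and then verify that the ansatz Eq.~\ref{betaE} solves the backward recursion Eq.~\ref{betarecursion}. Uniqueness of the operator in Eq.~\ref{betaE} then follows from a polarization argument. The rank-one projectors $P_\psi$, $\psi\in\C^d$, span all Hermitian matrices. $\cM_{t,dx_t}$ maps Hermitian operators to Hermitian operators, so $\cM^\dagger_{t,dx_t}$ does too, and $E_t$ stays Hermitian starting from $E_T=I$. Hence $\Tr(E P_\psi)=\Tr(E'P_\psi)$ for all $\psi$ forces $E=E'$.

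\textbf{Base case and induction setup.} At $t=T$ we have $q_T=1$ and $E_T=I$, so $\beta_T(\psi,x_T)=\|\psi\|^2=\Tr(IP_\psi)$. For the inductive step, assume $\beta_{t+dt}(\psi',x')=q_{t+dt}\Tr(E_{t+dt}(x')P_{\psi'})$ and insert this into Eq.~\ref{betarecursion}. I would then factor the one-step linear kernel as
\[
\cP^{(0)}(d\psi',dx_t\mid\psi,x)=\cP^{(0)}(dx_t\mid x)\,\cP^{(0)}(d\psi'\mid dx_t,\psi,x).
\]
The essential point is that for $\eta=0$ we have $c_a=0$, so Eq.~\ref{dx} reduces to $dx=f^c dt+dW$. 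The marginal law of $dx_t$ is therefore independent of $\psi$, and the observed process is Markov on its own. This gives $q_t=\cP^{(0)}(dx_t\mid x_t)\,q_{t+dt}$, and the prefactors match. What remains is to compute the conditional expectation $\E^{(0)}[P_{\psi'}\mid dx_t,\psi,x]$ and show that it equals $\cM_{t,dx_t}(P_\psi)$.

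\textbf{Main step and expected obstacle.} From Eq.~\ref{sme3} with $c_a=0$, the increment is $P_{\psi'}=P_\psi+dP_\psi$, where $dP_\psi$ contains the Lindblad drift and the noise term $L_aP_\psi d\xi_a+\mathrm{h.c.}$. The one-step update also contains the second-order term $Q_{ab}L_aP_\psi L_b^\dagger dt$, which originates from $d\psi\,d\psi^\dagger$. To first order I would use the Gaussian conditioning of Eq.~\ref{dxi1} with $h=f^c$, taking real noise and real $\Gamma$ as in Appendix~\ref{noisechoice}. This gives $\E[d\xi_a\mid dx]=(\Gamma^T C^{-1})_{aj}dY_j$, so the noise term contributes exactly $\cK_{i,t}(P_\psi)C^{-1}_{ij}dY_j$.

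The delicate part, which I expect to be the main obstacle, is the second-order contribution. Before taking the Itô limit one must write $d\psi\,d\psi^\dagger$ in terms of $d\xi_a d\xi_b$ rather than replacing it by $Q_{ab}dt$. Conditioning then gives
\[
\E[d\xi_a d\xi_b\mid dx]=S_{ab}dt+\mu_a\mu_b, \qquad \mu_a\mu_b=\bigl(\Gamma^TC^{-1}dY\,dY^TC^{-1}\Gamma\bigr)_{ab}.
\]
In the Itô sense $dY\,dY^T=C\,dt$, so $\mu_a\mu_b=(\Gamma^TC^{-1}\Gamma)_{ab}dt$. Adding $S=Q-\Gamma^TC^{-1}\Gamma$ recovers $Q\,dt$, so the full $\cL_L(P_\psi)dt$ survives. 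All remaining terms are $o(dt)$. Together these give $\E^{(0)}[P_{\psi'}\mid dx_t,\psi,x]=\cM_{t,dx_t}(P_\psi)+o(dt)$.

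\textbf{Conclusion.} We obtain
\[
\Tr(E_t P_\psi)=\Tr\bigl(E_{t+dt}(x')\,\cM_{t,dx_t}(P_\psi)\bigr)=\Tr\bigl(\cM^\dagger_{t,dx_t}(E_{t+dt}(x'))\,P_\psi\bigr)
\]
for all $\psi$, by the definition of the Hilbert--Schmidt adjoint. The polarization argument then yields Eq.~\ref{eq:effect-recursion-section_a}. Finally, for a normalized $\phi$ we have $\|\phi\|^2=1$, so the definition $\beta_t(\psi,x)=\|\psi\|^2b_t(\hat\psi,x)$ gives $b_t(\phi,x)=\beta_t(\phi,x)=q_t\Tr(E_t(x)P_\phi)$. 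This is Eq.~\ref{bE_a}.
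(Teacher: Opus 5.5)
Your argument is correct and follows the same backward-induction route as the paper. You insert the ansatz at $t+dt$ into Eq.~\ref{betarecursion}, recognize $\int\cP^{(0)}(d\psi',dx_t\mid\psi,x)P_{\psi'}$ as $(q_t/q_{t+dt})\,\cM_{t,dx_t}(P_\psi)$, and move $\cM_{t,dx_t}$ onto $E_{t+dt}$ via the Hilbert--Schmidt adjoint.

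The only difference is how that identification is justified. The paper gets it from a Bayes-rule argument: the linear filter is started from a point-mass posterior at $\psi$, so the KS-derived map of Appendix~\ref{appendixSME} does the work. You instead verify $\E^{(0)}[P_{\psi'}\mid dx_t,\psi,x]=\cM_{t,dx_t}(P_\psi)$ directly by Gaussian conditioning. This makes explicit the It\^o-sense replacement $\mu_a\mu_b\to(\Gamma^TC^{-1}\Gamma)_{ab}dt$ that restores the full $Q_{ab}L_aP_\psi L_b^\dagger$ term, and your polarization step gives the uniqueness of the implicitly defined $E_t$ that the paper leaves tacit.
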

\begin{proof}
 When $\sigma_t$ is a pure state  $\sigma_t=P_\psi$ one can write \footnote{
\beaa
\cP(\psi'|x_{0:t+dt})&=& \frac{\cP(\psi',x_{0:t},dx_t)}{\cP(x_{0:t+dt})}=\frac{p(x_{0:t})}{\cP(x_{0:t+dt})} \cP(\psi',dx_t|x_{0:t})
=\frac{\cP(x_{0:t})}{\cP(x_{0:t+dt})}\int d\psi \cP(\psi',dx_t\mid \psi,x_t) \cP(\psi\mid x_{0:t}) =\frac{\cP(x_{0:t})}{\cP(x_{0:t+dt})} \cP(\psi',dx_t\mid \psi,x_t) 
\eeaa
where we used $\cP(\psi|x_{0:t})$ is a delta function centered on $\psi$. The result follows because of the definition of $\sigma_{t+dt}$
\beaa
\sigma_{t+dt}&=&\cM_{t,dx_t}(P_\psi) =\int d\psi' P_{\psi'} \cP^{(0)}(\psi'|x_{0:t+dt})
\eeaa
}
\bea
\cM_{t,dx_t}(P_\psi) =\frac{\cP^{(0)}(x_{0:t})}{\cP^{(0)}(x_{0:t+dt})} \int  \cP^{(0)}(d\psi',dx_t\mid \psi,x_t) P_{\psi'}
\eea
Define the adjoint map $\cM^{\dagger}_{t,dx_t}$ so that for any two operators $A,B$ we have 
\bea
\Tr\left(\cM^{\dagger}_{t,dx_t}(A) B\right)=\Tr\left(A \cM_{t,dx_t} (B)\right)
\eea
The recursive relation for $\beta_t$ Eq.~\ref{betarecursion} implies a recursive relation for $E_t$.
Note that Eq.~\ref{betaE} holds for $t=T$.
Assume that Eq.~\ref{betaE} holds at $t+dt$, then
\beaa
\beta_t(\psi,x)
&=& q_{t+dt}\int \cP^{(0)}(d\psi',dx_t \mid \psi,x) \Tr\left(E_{t+dt}(x')P_{\psi'}\right)=q_t \Tr\left(E_{t+dt}(x')\cM_{t,dx_t}(P_{\psi})\right)\\
&=&q_t\Tr\left(\cM_{t,dx_t}^{\dagger}(E_{t+dt}(x'))P_{\psi}\right)=q_t \Tr\left(E_t(x) P_{\psi}\right)
\label{eq:beta-cancel-likelihood}
\eeaa
where in the last step we defined the retrograde recursion. 
%Since $\cM_{t,dx_t}$ is given by Eq.~\ref{eq:one-step-map-expansion}, its adjoint is given by the adjoints of Eqs.~\eqref{lindblad} and
%\eqref{eq:K-forward-section} as
%\begin{align}
%\cL_L^\dagger(E)
%&=i[H_t,E]
%+Q_{ab}\left(
%L_{b,t}^\dagger E L_{a,t}
%-\frac{1}{2}\{L_{b,t}^\dagger L_{a,t},E\}
%\right),
%\label{eq:L-adjoint-section}
%\\
%\mathcal K_{i,t}^\dagger(E)
%&=\Gamma_{ia}^{*}L_{a,t}^\dagger E
%+\Gamma_{ia}E L_{a,t}.
%\label{eq:K-adjoint-section}
%\end{align}
%Using the expansion Eq.~\eqref{eq:one-step-map-expansion}, the exact recursion
%Eq.~\eqref{eq:effect-recursion-section} gives
%\begin{align}
%E_t
%={}&E_{t+dt}
%+\mathcal L_t^\dagger(E_{t+dt})\,dt
%+\mathcal K_{i,t}^\dagger(E_{t+dt})
%C^{-1}_{ij}dY_j.
%\label{eq:effect-backward-discrete}
%\end{align}
Therefore Eq.~\ref{betaE} holds for all $t$, which completes the proof of the first statement. 
From Eq.~\ref{betaE} and the relation $\beta_t(\psi,x) = \|\psi\|^2b_t(\phi,x)$ with $\phi=\psi/\|\psi\|$ we obtain
\beaa
\cP^{(1)}(x_{t:T}\mid \phi,x)=\frac{\beta_t(\psi,x)}{\|\psi\|^2} = \frac{q_t}{\|\psi\|^2} \Tr(E_t(x)P_{\psi}) = q_t \Tr(E_t(x)P_{\phi}) 
\eeaa
which proves Eq.~\ref{bE_a}. 
 \end{proof}
 
For completeness, the retrograde filter equation Eq.~\ref{eq:effect-recursion-section_a} can be written explicitly as 
\begin{equation}
E_t=\mathcal M_{t,dY}^\dagger(E_{t+dt})
=
E_{t+dt}
+\mathcal L^\dagger(E_{t+dt})\,dt
+\mathcal K_i^\dagger(E_{t+dt})\,C^{-1}_{ij}\,dY_j
+o(dt).
\end{equation}
The adjoint Lindblad generator is
\begin{equation}
\mathcal L^\dagger(E)
=
i[H,E]
+
Q_{ab}
\left(
L_b^\dagger E L_a
-\frac12\{L_b^\dagger L_a,E\}
\right).
\end{equation}
The adjoint observation superoperator is $\cK^\dagger(E) = \Gamma_{ia} EL_a +h.c.$. 
Therefore, explicitly,
\begin{equation}
\boxed{
\begin{aligned}
E_t
={}&E_{t+dt}
+i[H,E_{t+dt}]\,dt\\
&+
Q_{ab}\left[
L_b^\dagger E_{t+dt}L_a
-\frac12
\left(
L_b^\dagger L_aE_{t+dt}
+
E_{t+dt}L_b^\dagger L_a
\right)
\right]dt\\
&+
\left(
\Gamma_{ia}E_{t+dt}L_a
+
\Gamma_{ia}^*L_a^\dagger E_{t+dt}
\right)
C^{-1}_{ij}\,dY_j
+o(dt).
\end{aligned}
}\label{explicit}
\end{equation}

Since $E_t$ and $\sigma_t$ evolve as adjoints, we have $\Tr(E_t \sigma_t) =\Tr(\cM^\dagger(E_{t+dt}) \sigma_t))=
\Tr(E_{t+dt} \sigma_{t+dt})$. 
Therefore 
\beaa
\Tr(E_t \sigma_t)=\Tr(E_T \sigma_T) = \Tr(\sigma_T) =\Lambda_T
\eeaa 
where the last step follows from Eq.~\ref{eq:TRsigma}.

From Eq.~\ref{doob_transform} the conditional physical and
linear path measures are related by 
\[
 \cP^{(1)}(d\phi \mid x_{0:t})
 =
 \frac{\|\psi\|^2}{\Lambda_t}
 \cP^{(0)}(d\psi \mid x_{0:t}).
\]
with $\Lambda_t=\int \|\psi\|^2 \cP^{(0)}(d\psi\mid x_{0:t}) = \Tr (\sigma_t)$. 
Since $\beta_t(\psi,x)=\|\psi\|^2b_t(\hat \psi,x)$, we obtain
\[
\begin{aligned}
p_t:=\cP^{(1)}(x_{t:T}\mid x_{0:t})=\int b_t(\phi,x_t)\,
   \cP^{(1)}(d\phi \mid x_{0:t})=\frac{1}{\Lambda_t}
  \int \beta_t(\psi,x_t)\,
  \cP^{(0)}(d\psi\mid x_{0:t})=\frac{q_t}{\Lambda_t} \Tr(E_t \sigma_t)
  \end{aligned}
\]
Hence $\Tr(E_t \sigma_t) = \Lambda_t \ell_t$ where we define $\ell_t=\frac{p_t}{q_t}$. 
Therefore we obtain
\beaa
\Tr(E_t \sigma_t) =\Lambda_t \ell_t = \Lambda_T\qquad \Tr(E_t \rho_t) = \frac{\Tr(E_t\sigma_t)}{\Lambda_t} = \ell_t
\eeaa
As a sanity check, we compute 
\beaa
p_t=\cP^{(1)}(x_{t:T}|x_{0:t})  =\int d\phi_t \cP^{(1)}(\phi_t\mid x_{0:t})\cP^{(1)}(x_{t:T}\mid \phi_t,x_t)=q_t \Tr(\rho_tE_t(x_t))=q_t\ell_t
\eeaa 

\section{Bayesian derivation of Eq.~\ref{gammelmark}}
\label{appendix_gammelmark}
We treat the outcome \(m\) as an additional classical observation inserted
between the past record \(x_{0:t}\) and the future record \(x_{t:T}\).
The filtered density matrix is defined by
\[
\rho_t^F
=
\int d\phi\,
P_\phi\,
\cP^{(1)}(\phi_t\mid x_{0:t}).
\]

For a fixed latent state \(P_\phi\), the probability of obtaining the
measurement outcome \(m\) is
\[
P(m\mid\phi_t)
=
\operatorname{Tr}\!\left(
\Omega_mP_\phi\Omega_m^\dagger
\right).
\]
Conditioned on the outcome \(m\), the normalized post-measurement state is
\[
P_{\phi,m}
=
\frac{
\Omega_mP_\phi\Omega_m^\dagger
}{
\operatorname{Tr}\!\left(
\Omega_mP_\phi\Omega_m^\dagger
\right)
}.
\]
The likelihood of the future record must then be evaluated using the
post-measurement state \(P_{\phi,m}\). From Eq.~\ref{bE}:
\[
\cP^{(1)}(x_{t:T}\mid m,\phi_t,x_t)
=
q_t\,
\operatorname{Tr}\!\left(
E_tP_{\phi,m}
\right).
\]
It follows that
\beaa
\cP^{(1)}(m,x_{t:T}\mid\phi_t,x_t)
&=&
P(m\mid\phi_t)\,
\cP^{(1)}(x_{t:T}\mid m,\phi_t,x_t)
=\operatorname{Tr}\!\left(
\Omega_mP_\phi\Omega_m^\dagger
\right)
q_t
\operatorname{Tr}\!\left[
E_t
\frac{
\Omega_mP_\phi\Omega_m^\dagger
}{
\operatorname{Tr}\!\left(
\Omega_mP_\phi\Omega_m^\dagger
\right)
}
\right]
\\
&=&
q_t\,
\operatorname{Tr}\!\left(
E_t\Omega_mP_\phi\Omega_m^\dagger
\right).
\eeaa
Averaging over the filtered distribution
\(\cP^{(1)}(\phi_t\mid x_{0:t})\) gives
\bea
\cP^{(1)}(m,x_{t:T}\mid x_{0:t})
&=&
\int d\phi\,
\cP^{(1)}(\phi_t\mid x_{0:t})\,
\cP^{(1)}(m,x_{t:T}\mid\phi_t,x_t)
=q_t
\int d\phi\,
\cP^{(1)}(\phi_t\mid x_{0:t})\,
\operatorname{Tr}\!\left(
E_t\Omega_mP_\phi\Omega_m^\dagger
\right)
\nonumber\\
&=&
q_t\,
\operatorname{Tr}\!\left(
\Omega_m\rho_t^F\Omega_m^\dagger E_t
\right),\label{gammelmark1}
\eea
Finally, Bayes' rule gives
\begin{align}
p_t(m)
&=
\cP^{(1)}(m\mid x_{0:T})
=
\frac{
\cP^{(1)}(m,x_{t:T}\mid x_{0:t})
}{
\displaystyle
\sum_{m'}
\cP^{(1)}(m',x_{t:T}\mid x_{0:t})
}.\nonumber
\end{align}
which, because of Eq.~\ref{gammelmark1} is equal to Eq.~\ref{gammelmark}.

\small{ 
\bibliography{/Users/bertkappen/doc/authors,/Users/bertkappen/doc/zotero}
\bibliographystyle{apalike} }
\end{document}